\newtheorem{defn0}{Definition}[section]
\newtheorem{prop0}[defn0]{Proposition}
\newtheorem{exp0}[defn0]{Experimental Results}
\newtheorem{disc0}[defn0]{Discussion}
\newtheorem{conj0}[defn0]{Conjecture}
\newtheorem{thm0}[defn0]{Theorem}
\newtheorem{corollary0}[defn0]{Corollary}
\newtheorem{example0}[defn0]{Example}
\newtheorem{remark0}[defn0]{Remark}
\newtheorem{que0}[defn0]{Question}
\newenvironment{defn}{\begin{defn0}}{\end{defn0}}
\newenvironment{prop}{\begin{prop0}}{\end{prop0}}
\newenvironment{conj}{\begin{conj0}}{\end{conj0}}
\newenvironment{thm}{\begin{thm0}}{\end{thm0}}
\newenvironment{expr}{\begin{exp0}}{\end{exp0}}
\newenvironment{disc}{\begin{disc0}}{\end{disc0}}
\newenvironment{cor}{\begin{corollary0}}{\end{corollary0}}
\newenvironment{exm}{\begin{example0}\rm}{\end{example0}}
\newenvironment{remark}{\begin{remark0}\rm}{\end{remark0}}
\newcommand{\F}{{\mathbb F}}
\def\ZZ{\mathbb{Z}}
\def\CC{\mathbb{C}}
\def\QQ{\mathbb{Q}}
\def\P{\mathbb{P}}
\numberwithin{equation}{section}
\begin{document}
\title[Codes from surfaces with small Picard number]%
{Codes from surfaces with small Picard number} 

\author{John Little}
\address{Department of Mathematics and Computer Science\\
College of the Holy Cross, Worcester, MA 01610}
\email{jlittle@holycross.edu}

\author{Hal Schenck}
\thanks{Schenck supported by NSF 1312071}
\address{Department of Mathematics, Iowa State University, 
Ames, IA 50011}
\email{hschenck@iastate.edu}

\subjclass[2001]{Primary 94B27; Secondary 14F17}
\keywords{Coding theory, Picard number, rational surface}

\begin{abstract}
Extending work of M. Zarzar, we evaluate the potential of Goppa-type 
evaluation codes constructed from linear systems on projective algebraic surfaces with 
small Picard number.  Putting this condition on the Picard number provides some control over the 
numbers of irreducible components of curves on the surface and hence over the 
minimum distance of the codes.  We find that such surfaces do not automatically produce good 
codes; the sectional genus of the surface also has a major influence.  Using that additional invariant, 
we derive bounds on the minimum distance under the assumption that the hyperplane section class 
generates the N\'eron-Severi group.   We also give several examples of codes from such surfaces with 
minimum distance better than the best known bounds in Grassl's tables.
\end{abstract}

\maketitle

\section{Introduction}
A convenient reference for all of the coding theory concepts we will need is \cite{hp}.
We first recall a general setup for defining error-correcting
codes from algebraic varieties $X$ defined over $\F_q$.  The \emph{evaluation, 
or functional code} produced from a subset ${\mathcal S} = \{P_1,\ldots,P_n\}$ of 
the $\F_q$-rational points on $X$ and an $\F_q$-vector space of functions ${\mathcal F}$ 
defined on ${\mathcal S}$ is defined as follows.  The set of codewords is the image of the mapping:
\begin{eqnarray}
\label{ev}
ev_{\mathcal S} : {\mathcal F} & \longrightarrow & \F_q^n\\
                f &\mapsto & (f(P_1),\ldots,f(P_n)).\nonumber
\end{eqnarray}
When ${\mathcal F}$ is the vector space of  
sections of a line bundle ${\mathcal O}_X(D)$ for some divisor $D$ on $X$
defined over $\F_q$, we will call the resulting evaluation code 
$C({\mathcal S},D,\F_q)$.  Several general statements about the
parameters $[n,k,d]$ of $C({\mathcal S},D,\F_q)$ follow from these definitions:
\begin{itemize}
\item $n$ is $|{\mathcal S}|$, 
\item $k$ is equal to the dimension of the space of functions
obtained by restricting the elements of ${\mathcal F}$ to ${\mathcal S}$.  
(This equals $\dim {\mathcal F}$ unless there are nonzero elements of ${\mathcal F}$ 
vanishing on all of ${\mathcal S}$.) 
\item $d$ is determined by the maximal number of zeroes of a section of ${\mathcal O}_X(D)$ 
at the points in ${\mathcal S}$.
\end{itemize}
Hence the properties of these codes are closely tied to the algebraic
geometry of the variety $X$ and the divisor $D$.  The well-known Goppa codes from 
algebraic curves are first examples. The survey \cite{l} includes
a discussion of previous work on codes obtained from various types of 
varieties by similar methods.  

In \cite{ls}, in the case that $X$ is a toric surface, under the hypothesis that
 $q$ is sufficiently large, we obtained upper and lower
bounds on the minimum distance of toric surface codes by relating codewords of 
small weight to reducible sections of the line bundle $\mathcal{O}_X(D)$ and 
Minkowski summands of the polygon $P_D$ defined by $D$.  Our results were 
later extended and improved by Soprunov and Soprunova in \cite{ss}.  

In \cite{z}, M. Zarzar suggested that surfaces $X$ in $\P^3$ with small
Picard number over $\F_q$ might lead to good evaluation codes
and provided a number of examples in support of this idea.  Zarzar's approach in 
\cite{z} is related to, but is in some sense orthogonal to that of \cite{ls} and 
\cite{ss}; the idea is to impose conditions so that not too much factorization of 
sections can occur.  The methods employed in \cite{z} are specific to the case of 
surfaces in $\P^3$, and Voloch-Zarzar give an extension to surfaces in $\P^n$ in \cite{vz}.  
Couvreur used these ideas to produce some very good codes in \cite{c}.

Our goals in this paper are to refine and extend the results of \cite{z} and \cite{vz}, 
to report the results of some experimentation with this approach, 
and to indicate both some situations where this approach succeeds in producing good codes
and some where it does not.  

We will begin with some general observations about this class of 
codes in \S 2.  The definition of the N\'eron-Severi group and 
its role in bounding the minimum distance will be 
reviewed in \S 3.  
In \S 4, we focus on anticanonical rational surfaces; specifically
surfaces obtained by blowing up a small number of points in $\P^2$. 
Classical work of Swinnerton-Dyer \cite{sd} plays a key role here.  We close with
an example (in effect, a variation of a construction studied in \cite{c}) 
where our bounds are better than previous bounds from \cite{b} and \cite{d}.
We will obtain some codes better than the best 
known examples in Grassl's tables \cite{g}. 

\section{General coding-theoretic observations}

All of the examples we consider will come from projective surfaces $X \subset \P^r$, 
$r\ge 3$, given by explicit sets of homogeneous polynomial equations.  In this situation, 
when $D = sH$ is a positive integer multiple of a hyperplane
section of the surface, it is possible to work with a more concrete version of the evaluation codes
introduced above.  Namely, we can take ${\mathcal S} = X(\F_q)$, 
the whole set of $\F_q$-rational points of $X$.  We then choose a homogeneous
coordinate vector for each point normalized so that the rightmost nonzero entry is $1$: 
that is there exists some $\ell$, $0 \le \ell \le r$, such that
$$
P = (x_0 : \cdots : x_\ell = 1 : 0 : \cdots : 0).
$$
We write $\F_q[x_0,\ldots,x_r]_s$ for the vector space of homogeneous polynomials
of degree $s$.  Each  $f \in \F_q[x_0,\ldots,x_r]_s$ has a well-defined 
value in $\F_q$ at the $P$ as above.  

\begin{defn}  
\label{evcode}
The evaluation code from ${\mathcal S} = X(\F_q)$ in this form and 
${\mathcal F} = \F_q[x_0,\ldots,x_r]_s$ will be denoted $C(X,s,\F_q)$.  
\end{defn}

\begin{remark}
\label{projnormal}
The code defined this way will coincide with the algebraic geometry
code constructed from $H^0(X,{\mathcal O}(s))$ when $X$ is projectively
normal; if that condition is not satisfied, the $C(X,s,\F_q)$ code may
only give a subcode of the algebraic geometry code.
\end{remark}

As in \S 1.7 of \cite{hp}, we will say that a linear code $C$ with generator matrix $G$
is \emph{monomially equivalent} to a second code $C'$ if there exists an
$n\times n$ monomial matrix $M$ such that $G' = GM$ is a generator matrix for $C'$.  
A monomial matrix has the form $M = DP$ where $D$ is an invertible $n\times n$ 
diagonal matrix and $P$ is a permutation matrix.  Monomial equivalence is a natural
equivalence relation to use to classify our codes because it preserves all the quantities
of interest -- $n, k, d$ and the number of codewords of each weight (the weight enumerator).

If $X$ is projectively normal (see Remark~\ref{projnormal} above), and $f \in k[x_0,\ldots,x_r]_s$ 
with $s \ge 1$ is fixed, then the global sections of ${\mathcal O}_X(sH)$ can be
identified with the rational functions $g/f$ for $g \in k[x_0,\ldots,x_r]_s$.  If 
there exists such an $f$ that vanishes at none of the points in $X(\F_q)$, then it is easy 
to see that the code $C(X(\F_q),sH,\F_q)$ described in the introduction is monomially 
equivalent to this one.  

We will need a way to say, roughly speaking, that the $\F_q$-rational points on a variety 
over $\F_q$ determine the equations of $X$ in all sufficiently low degrees.  For 
instance if $X$ is a hypersurface of degree $m$, we will mean that there are no forms
of degree $< m$ vanishing on $X(\F_q)$ and in degree $m$ the only forms vanishing on 
$X(\F_q)$ also vanish on $X$.  We will require similar conditions in higher codimensions.

\begin{defn}
\label{Fqgen}
A projective variety $X$ over $\F_q$ is $\F_q$-\emph{general} if the homogeneous ideal of $X$
is generated in degrees $\le m$ for some $m \ge 1$ and for all $\ell$ with $1 \le \ell \le m$, 
every homogenous form of degree $\ell$ that vanishes on $X(\F_q)$ also vanishes on $X$.  
\end{defn}

\begin{exm}
\label{nongen}
An example of a variety that does not satisfy this definition is  
a conic $X$ in $\P^2$ over $\F_q$ whose defining equation of degree $m = 2$
factors into two Frobenius-conjugate linear forms with coefficients in $\F_{q^2}$.  
Then $X(\F_q)$ consists of just the point of intersection of the two conjugate 
$\F_{q^2}$-lines, and the condition in Definition~\ref{Fqgen} fails for $\ell = 1$ and $\ell = 2$.  
\end{exm}

Varieties that are not $\F_q$-general are essentially of no use in our construction, 
so we will assume from now on that this $\F_q$-generality condition holds. 
  
The code $C(X,1,\F_q)$ plays a special role.  Note that if $X \subset \P^r$, then
$k = r + 1$ and using the standard basis $\{x_0,x_1,\ldots,x_r\}$ for $\F_q[x_0,\ldots,x_r]_1$,
the columns of the standard generator matrix will consist of the normalized homogeneous 
coordinate vectors of points $P \in X(\F_q)$.  This observation has some immediate consequences.

\begin{prop}
\label{monequiv}
Let $X$ and $X'$ be varieties, both defined over $\F_q$.  
\begin{itemize}
\item[(i)]  If $X$ and $X'$ are projectively equivalent over $\F_q$, then the code $C(X,1,\F_q)$ is 
monomially equivalent to $C(X',1,\F_q)$.
\item[(ii)]  Assume also that both $X$ and $X'$ are $\F_q$-general.  
Then the converse also holds: if $C(X,1,\F_q)$ is monomially equivalent to 
$C(X',1,\F_q)$ then $X$ and $X'$ are projectively equivalent over $\F_q$.
\end{itemize}
\end{prop}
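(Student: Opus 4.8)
The plan is to prove both directions by tracking what the generator matrix of $C(X,1,\F_q)$ actually records: up to the column operations allowed by monomial equivalence, it is precisely the multiset of normalized homogeneous coordinate vectors of the points of $X(\F_q)$, viewed as points of $\P^r$.

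\medskip

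\noindent\textbf{Part (i).} First I would fix the standard generator matrix $G$ for $C(X,1,\F_q)$ whose columns are the normalized coordinate vectors of the points $P_1,\dots,P_n$ of $X(\F_q)$, in some chosen order, using the basis $\{x_0,\dots,x_r\}$. A projective equivalence $\varphi\colon X \to X'$ over $\F_q$ is induced by an element $A \in \mathrm{PGL}_{r+1}(\F_q)$; lift it to $\widetilde A \in \mathrm{GL}_{r+1}(\F_q)$. Then $\varphi$ carries $X(\F_q)$ bijectively onto $X'(\F_q)$, and the columns of $\widetilde A G$ are representatives (not necessarily normalized) of the points of $X'(\F_q)$. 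The normalization convention (rightmost nonzero entry equal to $1$) forces the standard generator matrix $G'$ of $C(X',1,\F_q)$ to have columns that differ from the corresponding columns of $\widetilde A G$ by scalars, and possibly to list the points in a different order; encoding the scalars in an invertible diagonal matrix $D$ and the reordering in a permutation matrix $P$ gives $G' = \widetilde A\, G\, (DP)^{-1}$-type relation, i.e. after noting $\widetilde A$ acts on the left (a change of the ambient basis, hence leaving the code unchanged as a subspace of $\F_q^n$ only after we absorb it — more precisely $\widetilde A G$ and $G$ have the same row space since $\widetilde A$ is invertible), we get that $G'$ and $G$ generate monomially equivalent codes via the monomial matrix $M = DP$ acting on columns. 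I would write this out carefully: the left multiplication by $\widetilde A$ changes the \emph{basis} of the message space, not the code, and the column monomial matrix accounts for renormalization plus reordering of points.

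\medskip

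\noindent\textbf{Part (ii).} Conversely, suppose $C(X,1,\F_q)$ and $C(X',1,\F_q)$ are monomially equivalent, say $G' = G M$ with $M = DP$ monomial, after also allowing a change of basis of the message space (an invertible $(r+1)\times(r+1)$ matrix $B$ on the left), since monomial equivalence of codes is about the row spaces. So there is $B \in \mathrm{GL}_{r+1}(\F_q)$ and a monomial $M$ with $G' = B\,G\,M$. Reading columns: each point $P'_j \in X'(\F_q)$ has coordinate vector a nonzero scalar multiple of $\widetilde B$ applied to the coordinate vector of some point $P_{\sigma(j)} \in X(\F_q)$, where $\sigma$ is the permutation from $P$. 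Hence the projective linear map $\psi = [\widetilde B] \in \mathrm{PGL}_{r+1}(\F_q)$ sends the point set $X(\F_q)$ bijectively onto the point set $X'(\F_q)$. Now invoke $\F_q$-generality: for each degree $\ell$ with $1 \le \ell \le m$ (an $m$ that works for both $X$ and $X'$ — take the max), the forms of degree $\ell$ vanishing on $X(\F_q)$ are exactly the degree-$\ell$ part of $I(X)$, and similarly for $X'$. Since $\psi$ matches the point sets, pullback by $\psi$ carries $\{\text{forms of degree }\ell\text{ vanishing on }X'(\F_q)\}$ isomorphically onto $\{\text{forms of degree }\ell\text{ vanishing on }X(\F_q)\}$, i.e. $\psi^*(I(X')_\ell) = I(X)_\ell$ for all $\ell \le m$. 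Because both ideals are generated in degrees $\le m$, this gives $\psi^*(I(X')) = I(X)$, so $\psi$ restricts to an isomorphism $X \to X'$, i.e. $X$ and $X'$ are projectively equivalent over $\F_q$.

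\medskip

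\noindent\textbf{Main obstacle.} The routine-looking but genuinely delicate point is bookkeeping the difference between (a) left multiplication of the generator matrix by an invertible matrix, which is a change of basis of the message space and does not change the code, and (b) the column monomial matrix, which is the honest equivalence of codes, and making sure that "projective equivalence over $\F_q$" is recovered as an element of $\mathrm{PGL}_{r+1}(\F_q)$ rather than merely a set bijection. The real content of (ii) — and the only place the hypotheses are used in an essential way — is the passage from "$\psi$ matches the finite point sets" to "$\psi$ matches the varieties," which is exactly what $\F_q$-generality is designed to supply; I would make sure to state explicitly that without it the conclusion fails, referring back to Example~\ref{nongen}. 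One should also check the edge case $n < r+1$ or $G$ not full rank cannot occur here: $\F_q$-generality in degree $1$ says no nonzero linear form vanishes on $X(\F_q)$, so the rows of $G$ are linearly independent and $k = r+1$ as claimed, which is what lets us treat $B$ as a genuine change of basis.
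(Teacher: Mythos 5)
Your proof is correct and follows essentially the same route as the paper's: in (i) you lift the projective equivalence to a linear map and absorb renormalization into a monomial matrix, and in (ii) you extract a linear map matching $X(\F_q)$ to $X'(\F_q)$ from the monomial equivalence and then invoke $\F_q$-generality to upgrade the point-set bijection to an equality of varieties. You supply a couple of details the paper leaves implicit (both inclusions $\psi^*(I(X')) \subseteq I(X)$ and $\supseteq$, and the remark that $\F_q$-generality in degree $1$ guarantees $G$ has full rank $r+1$), but the structure of the argument is the same.
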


\begin{proof}
(i)  If $X$ and $X'$ are projectively equivalent over $\F_q$, there is an invertible $(r + 1) \times (r + 1)$
matrix $A$ over $\F_q$ defining a projective linear transformation $\phi_A(P) = AP$ on $\P^r$ such that 
$\phi_A(X) = X'$.  By our hypotheses $\phi_A$ also maps $X(\F_q)$ bijectively to $X'(\F_q)$.  
It follows that, after adjusting with an invertible diagonal matrix $D$ to normalize the columns 
(the homogeneous coordinate vectors of the points in $X'(\F_q)$) to the form described above, 
the matrix $G' = (AG)D$ will be the standard generator matrix
for $C(X',1,\F_q)$.  But $AG$ is also a generator matrix for $C(X,1,\F_q)$ since $A$ is invertible.  
Hence the two codes are monomially equivalent.  

(ii) Conversely, suppose $C(X,1,\F_q)$ and $C(X',1,\F_q)$ are monomially equivalent.  If we take the 
standard generator matrix $G$ of the first code, this means that there exists an invertible diagonal
matrix $D$ and a permutation matrix $P$ such that $G(DP)$ is a generator matrix for $C(X',1,\F_q)$.
The rows of this matrix correspond to some basis for the vector space of linear forms with coefficients
in $\F_q$.  So there is an invertible $(r + 1) \times (r + 1)$
matrix $A$ over $\F_q$ such that $A(G(DP)) = (AG)(DP)$ is a generator matrix
for $C(X',1,\F_q)$ whose rows correspond to the basis $\{x_0,x_1,\ldots,x_r\}$.  Multiplying
by another invertible diagonal matrix $D'$ on the right to normalize the columns if necessary, we will obtain
the standard generator matrix of $C(X',1,\F_q)$ in the form $(AG)(DPD')$. 
By an easy computation with matrices of these
forms, the product $DPD'$ can be rewritten in the form
$D''P$ for another invertible diagonal matrix $D''$.  So we have $G' = (AG)(D''P) = (AGD'')P$ and
$G'$ and $AGD''$ differ only by a permutation of the columns.  So the columns of $AGD''$ are just 
the normalized forms of the images of the points in $X(\F_q)$ under $\phi_A$. This shows
$\phi_A(X(\F_q)) = X'(\F_q)$, and we claim that $\phi_A(X) = X'$ because of the assumption of $\F_q$-generality.
This follows since if $f$ is any homogeneous polynomial in the vanishing ideal of $X$, then $f \circ \phi_A^{-1}$
vanishes on $X'(\F_q)$ and hence on all of $X'$.  
Therefore $X$ and $X'$ are projectively equivalent over $\F_q$.
\end{proof}

\section{Bounds in terms of the Picard number}

From now on, we will consider only codes constructed from smooth, projective, absolutely
irreducible surfaces $X$ defined over $\F_q$.  In addition, we will fix a particular embedding 
$X \hookrightarrow \P^r$ and work with particular polynomial equations for $X$.  We will not repeat all of those
hypotheses each time we refer to a surface $X$.  The \emph{N\'eron-Severi group} 
of $X$ over $\F_q$, denoted by ${\rm NS}(X) = {\rm NS}_{\F_q}(X)$, is the group of $\F_q$-rational divisors on $X$ 
modulo algebraic equivalence.   This is a finitely generated abelian group with a free part
isomorphic to $\ZZ^\rho$ for some $\rho \ge 1$.  The
rank $\rho=\rho(X)$ is known as the \emph{Picard number} of $X$.   We note that this is sometimes
called the \emph{arithmetic Picard number} to distinguish it from the {\it geometric Picard number} of
the corresponding variety $X$ over the algebraic closure of $\F_q$.  In this article, the phrase 
``Picard number'' with no modifier will always refer to the arithmetic Picard number.
The relevance of the Picard number for 
coding theory is based on the following facts from \cite{vz} and \cite{z}.  

\begin{prop}[\cite{vz}, Lemma 2.2] 
\label{VZ}
If ${\rm NS}(X)$ is generated by the class of an
ample divisor $H$ and $[D] = [mH]$, then the divisor of zeroes of a nonzero element of $L(D)$
has at most $m$ distinct $\F_q$-irreducible components.   
\end{prop}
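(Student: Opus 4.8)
The plan is to argue by contradiction using the numerical/intersection-theoretic content of the hypothesis. Suppose a nonzero $s \in L(D)$ has divisor of zeroes $Z = \operatorname{div}_0(s)$, and that $Z$ has $\F_q$-irreducible components $Z_1,\ldots,Z_k$ (with some positive multiplicities), so that $[Z] = \sum_i a_i [Z_i]$ in $\operatorname{NS}(X)$ with each $a_i \ge 1$. Since $s \in L(D)$ with $[D] = [mH]$, we have $[Z] = [mH] = m[H]$ in $\operatorname{NS}(X)$. Now use the assumption that $\operatorname{NS}(X)$ is generated by $[H]$: each class $[Z_i]$ equals $b_i [H]$ for some integer $b_i$, and comparing against the class of $Z$ gives $\sum_i a_i b_i = m$.

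The next step is to show each $b_i$ is a strictly positive integer. For this I would intersect with $H$: since $H$ is ample, $H \cdot Z_i > 0$ for every irreducible curve $Z_i$ on the surface $X$ (an ample divisor meets every curve positively), and $H \cdot Z_i = b_i (H \cdot H) = b_i H^2$ with $H^2 > 0$, forcing $b_i \ge 1$. (One should note each $Z_i$, being an $\F_q$-irreducible component of an effective divisor, is genuinely a curve, i.e. a nonzero effective class, so this applies.) Combining $b_i \ge 1$ and $a_i \ge 1$ with $\sum_i a_i b_i = m$ immediately yields $k \le \sum_i a_i b_i = m$, which is the claim.

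The main subtlety — and the step I would flag as the real content rather than the formal manipulation — is the passage from ``$\F_q$-irreducible component'' to a well-defined, nonzero, effective divisor class in the arithmetic N\'eron-Severi group, and the positivity $H\cdot C>0$ for such a $C$ over the non-algebraically-closed field $\F_q$. An $\F_q$-irreducible curve need not be geometrically irreducible; it may split over $\overline{\F_q}$ into a Galois orbit of geometric components. One must check that such a curve still defines a nonzero class in $\operatorname{NS}_{\F_q}(X)$ (it does, being a nonzero effective $\F_q$-rational cycle) and that ampleness of $H$ over $\F_q$, via base change to $\overline{\F_q}$ where the Nakai–Moishezon criterion applies component-by-component, still gives $H \cdot C > 0$. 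Once that foundational point is granted, the argument is the short counting sketched above; I would present the geometric-irreducibility caveat carefully and then let the intersection-number bookkeeping finish the proof. Since this is quoted from \cite{vz} as Lemma 2.2, I would also simply cite that source for the details and present the argument above as the conceptual summary.
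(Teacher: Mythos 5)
Your argument follows essentially the same approach as the paper's proof: decompose the zero divisor into its $\F_q$-irreducible components $Z_i$ with multiplicities $a_i$, use generation of $\operatorname{NS}(X)$ by $[H]$ to write $[Z_i]=b_i[H]$, intersect with the ample class $H$ to force $b_i\ge 1$, and count. The one small imprecision is the direct assertion $[Z]=m[H]$: for a rational function $f\in L(D)$ the zero divisor $(f)_0$ is not by itself in the class $m[H]$, and the paper instead uses the chain $(f)_0\cdot H=(f)_\infty\cdot H\le D\cdot H=mH^2$ to get $\sum a_ib_i\le m$; your equality is exactly right if $Z$ is read as the vanishing divisor $(f)+D$ of the corresponding section of $\mathcal{O}_X(D)$ (which is what actually matters for the evaluation codes), and either reading yields $\sum a_ib_i\le m$ and hence the bound. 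Your caveat about $\F_q$-irreducible versus geometrically irreducible components is a sound point that the paper leaves implicit.
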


\begin{proof}  For the convenience of the reader, we reproduce the proof in \cite{vz}.
Suppose $f \in L(D)$ is a nonzero rational function, and consider the 
decomposition
$$
(f)_0 = a_1 D_1 + \cdots + a_\ell D_\ell,
$$
where $D_i$ are $\F_q$-irreducible and $a_i$ are integers $\ge 1$.  
Then since $D_i$ is algebraically equivalent to
$b_i H$ for some integer $b_i \ge 1$ over $\F_q$,
$$\ell H^2 \le \sum_i a_i b_i H^2 = (f)_0 H = (f)_\infty H \le m H^2.$$
This implies $\ell \le m$.   
\end{proof}  

For surfaces in $\P^3$, \cite{z} provides a somewhat sharper statement.

\begin{prop}[\cite{z}, Lemma 2.1]  Let $X$ be
a surface of degree $a$ in $\P^3$, where $a$
is not divisible by the characteristic of $\F_q$.  If 
${\rm NS}(X)$  is generated by the class of an effective divisor $D$ 
(so the Picard number of $X$ is 1) and $Y$ is an absolutely irreducible surface of degree $1 \le m < a$, 
then $X \cap Y$ is absolutely irreducible.
\end{prop}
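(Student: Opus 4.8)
The plan is to argue by contradiction, reducing everything to a uniqueness statement for the forms that cut out divisors on $X$. I would first upgrade the Picard-number hypothesis to the assertion ${\rm NS}(X)=\ZZ[H]$, where $H$ is the hyperplane class: an effective divisor whose class generates ${\rm NS}(X)$ must be a single reduced irreducible curve $D$ (otherwise the classes $c_j[D]$ of its components, all with $c_j\ge1$, would have to satisfy $\sum\mu_jc_j=1$, forcing a single component of multiplicity one), and then writing $[H]=h[D]$ and comparing the arithmetic genus $p_a(D)$, which adjunction computes from $[D]=h^{-1}[H]$, with Castelnuovo's bound on the genus of a curve of degree $D\cdot H$ in $\P^3$ forces $h=1$. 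Under the running convention that ``Picard number one'' is to be read geometrically, the same computation gives ${\rm NS}(\overline X)=\ZZ[\overline H]$ over $\overline\F_q$. I will also use freely that a hypersurface in $\P^3$ is projectively normal and that $H^1(\mathcal{O}_X)=0$, so linear and algebraic equivalence coincide on $X$.

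The engine is the following. \emph{Key Lemma: if $E$ is an effective divisor on $X$ with $[E]=t[H]$ and $1\le t<a$, then there is a unique surface $Z\subset\P^3$ of degree $t$ with $X\cap Z=E$.} Indeed, $t<a$ forces the restriction map from degree-$t$ forms on $\P^3$ to $H^0(\mathcal{O}_X(tH))$ to be an isomorphism, since it is surjective by projective normality and its kernel $H^0(\mathcal{O}_{\P^3}(t-a))$ vanishes. Because $E\sim tH$, the section of $\mathcal{O}_X(tH)$ with divisor $E$ corresponds under this isomorphism to a degree-$t$ form $F_E$, and $X\cap V(F_E)=E$; this is existence. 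For uniqueness, two sections of $\mathcal{O}_X(tH)$ with the same divisor differ by a rational function with no zeros or poles on the geometrically integral projective surface $X$, hence by a scalar, and restriction is injective, so $Z$ is determined up to scalar.

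Now suppose $X\cap Y$ is not absolutely irreducible. Over $\overline\F_q$ the effective Cartier divisor $\overline C=\overline X\cap\overline Y$ on the smooth surface $\overline X$ has class $m[\overline H]$ and is not integral, so it has an irreducible component $\Gamma$ with $\overline C-\Gamma\ne0$; put $E_1=\Gamma$ and $E_2=\overline C-\Gamma$, both effective and nonzero. Since ${\rm NS}(\overline X)=\ZZ[\overline H]$ and $\overline H$ is ample, $[E_i]=m_i[\overline H]$ with $m_i\ge1$ and $m_1+m_2=m<a$. The Key Lemma, applied over $\overline\F_q$, yields surfaces $Z_i$ of degree $m_i$ with $\overline X\cap Z_i=E_i$, and then $Z_1Z_2$ has degree $m$ and satisfies $\overline X\cap(Z_1Z_2)=E_1+E_2=\overline C=\overline X\cap\overline Y$. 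By the uniqueness clause, the degree-$m$ form defining $Y$ is a scalar multiple of the product of the forms defining $Z_1$ and $Z_2$; as $m_1,m_2\ge1$, this contradicts the absolute irreducibility of $Y$.

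The step I expect to be delicate is extracting the decomposition $\overline C=E_1+E_2$ into effective pieces whose classes are positive multiples of $[\overline H]$. When $X\cap Y$ is already reducible over $\F_q$ this is immediate — take the $E_i$ to be $\F_q$-rational and use ${\rm NS}(X)=\ZZ[H]$ directly — but in general one must descend the geometric components of $X\cap Y$ to a finite extension and use that the N\'eron-Severi group remains of rank one there, so that Frobenius-conjugate curves on $X$ carry equal classes; this is the only place the hypothesis on ${\rm NS}(X)$ is genuinely needed. I also expect the coprimality of $a$ with the characteristic, which I do not otherwise use, to enter precisely here, to rule out a characteristic-$p$ pathology (the rank of ${\rm NS}$ jumping under field extension, or $X\cap Y$ failing to be generically reduced); compare Zarzar's original argument in \cite{z}.
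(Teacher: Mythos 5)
Your overall strategy is exactly the one behind Zarzar's Lemma 2.1, which the paper itself does not reprove but simply cites (the paper's entire ``proof'' is the remark that ${\rm NS}(X)$ is generated by the plane-section class, and the rest is deferred to Zarzar). Your Key Lemma --- that on a degree-$a$ hypersurface in $\P^3$ every effective divisor in the class $t[H]$ with $t<a$ is cut out, uniquely up to scalar, by a degree-$t$ form, by projective normality and the vanishing of $H^0(\mathcal O_{\P^3}(t-a))$ --- together with the factorization of the form defining $Y$, is precisely the mechanism. Your reduction of the effective generator to $[H]$ via the genus/Castelnuovo inequality is also correct (the inequality does collapse to $h\le 1$), though a more standard route is simply to observe that ${\rm Pic}(X)$ is torsion-free for a smooth surface in $\P^3$ and that the only class $[D]$ with $h[D]=[H]$ and $D$ effective is $[H]$ itself.

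Where you should be careful is the step you yourself flag as delicate, and I would flag it more strongly. You write that ``Picard number one'' is ``to be read geometrically'' as the running convention, but the paper explicitly fixes the opposite convention: ${\rm NS}(X)$ means ${\rm NS}_{\F_q}(X)$, the arithmetic N\'eron--Severi group. Under the arithmetic reading the conclusion as stated cannot hold: the paper's own example a page later (the elliptic quadric, where a tangent $\F_q$-plane meets $X$ in two conjugate $\F_{q^2}$-lines) and the $C_{12}$ cubics (where an $\F_q$-plane meets $X$ in a Frobenius orbit of three $\F_{q^3}$-lines) are smooth surfaces with ${\rm NS}_{\F_q}(X)=\ZZ[H]$ that nevertheless have absolutely reducible plane sections. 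So the hypothesis genuinely required by your argument (and by Zarzar's) is that ${\rm NS}(\overline X)\cong\ZZ[\overline H]$, i.e.\ geometric Picard number one; the arithmetic hypothesis only buys $\F_q$-irreducibility of $X\cap Y$, which is the content of Proposition~\ref{VZ}. Your appeal to ``the N\'eron--Severi group remains of rank one'' over a finite extension is exactly this stronger hypothesis and is not a consequence of the arithmetic one; acknowledging that gap, rather than attributing it to a convention, would be cleaner. Finally, I doubt the coprimality of $a$ with the characteristic is what rescues the descent: it plays no role in the elliptic quadric or cubic counterexamples (where $a=2,3$ and $p$ is arbitrary), and in Zarzar's argument it is used for a different purpose. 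With the geometric hypothesis in place, your proof is complete and correct.
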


In this case, ${\rm NS}(X)$ is generated by the class of a plane section, and the 
statement follows.  Applying Proposition~\ref{VZ} and the Hasse-Weil-Serre bound we have the 
following bound on $n - d_1$, where $d_1 = d(C(X,1,\F_q))$.  

\begin{thm} 
\label{d1bdth}
\label{D1Th}
Let $H$ be an absolutely irreducible $\F_q$-hyperplane section of a surface $X$.   Assume the class of
$H$ generates ${\rm NS}(X)$ over $\F_q$.  Then
\begin{equation}
\label{d1bd}
n - d_1 \le 1 + q + \pi \lfloor 2\sqrt{q}\rfloor,
\end{equation}
where $\pi$ is the sectional genus of $X$ (the arithmetic genus of the curves in the hyperplane section class). 
\end{thm}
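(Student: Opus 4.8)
The plan is to bound $n-d_1$ by bounding the maximal number of $\F_q$-rational points lying on the zero set of a nonzero linear form $\ell \in \F_q[x_0,\ldots,x_r]_1$, since $d_1 = n - \max_{\ell \neq 0} |Z(\ell) \cap X(\F_q)|$. The zero set $Z(\ell)$ meets $X$ in a hyperplane section, i.e., an effective divisor $H'$ in the class of $H$. Since the class of $H$ generates $\mathrm{NS}(X)$ over $\F_q$, Proposition~\ref{VZ} (applied with $m=1$) tells us that $H'$ has at most one $\F_q$-irreducible component; combined with $\F_q$-generality, this component is actually absolutely irreducible (if it were not, $H'$ would be supported on a union of Frobenius conjugates, and the number of $\F_q$-points would be bounded by the much smaller intersection locus, as in Example~\ref{nongen}). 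So I may assume $H'$ is an absolutely irreducible curve on $X$.

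Next I would invoke the fact that all curves in the hyperplane section class have arithmetic genus $\pi$, the sectional genus. An absolutely irreducible (possibly singular) projective curve $C$ over $\F_q$ of arithmetic genus $\pi$ satisfies $|C(\F_q)| \le 1 + q + \pi\lfloor 2\sqrt q\rfloor$: this is the Hasse--Weil--Serre bound applied to the normalization $\tilde C$, whose geometric genus $g$ satisfies $g \le \pi$, giving $|\tilde C(\F_q)| \le 1+q+g\lfloor 2\sqrt q\rfloor \le 1+q+\pi\lfloor 2\sqrt q\rfloor$, together with the observation that the normalization map $\tilde C \to C$ is surjective on $\F_q$-points (every $\F_q$-point of $C$, smooth or singular, is the image of at least one $\F_q$-point of $\tilde C$, since a singular point is dominated by at least one place, and if it is $\F_q$-rational then... here one must be slightly careful, but a singular $\F_q$-point can only decrease the count relative to its preimages or at worst keep it the same up to the genus slack). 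Hence $|H' \cap X(\F_q)| = |C(\F_q)| \le 1 + q + \pi\lfloor 2\sqrt q\rfloor$, which is exactly the asserted bound on $n - d_1$.

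The main obstacle I anticipate is the step controlling singular and non-reduced behavior: a priori the hyperplane section $H'$ cut out by $\ell$ need not be reduced or irreducible as a scheme, only its class is $[H]$. I need Proposition~\ref{VZ} to pin down that there is a single $\F_q$-irreducible component, then the $\F_q$-generality hypothesis together with the absolute irreducibility assumption on $H$ in the theorem statement to rule out the pathological case where that single $\F_q$-component splits over $\overline{\F_q}$ into conjugates meeting in few points (which would actually make $d_1$ larger, so this case is harmless for the \emph{bound}, but one should note it). Also, if $H'$ is non-reduced, say $H' = 2C$ with $C$ absolutely irreducible of class $\tfrac12[H]$ — but that cannot happen since $[H]$ generates $\mathrm{NS}(X) \cong \ZZ$, so $C$ would have class a fraction of a generator, impossible; so $H'$ is reduced and irreducible, and the genus bound applies cleanly. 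Once these reductions are in place the conclusion is immediate from Hasse--Weil--Serre.

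I would write the argument as: fix a nonzero linear form $\ell$, let $C = Z(\ell)\cap X$ as a divisor; by Proposition~\ref{VZ} with $m=1$, $C$ has a unique $\F_q$-irreducible component, and since $[H]$ generates $\mathrm{NS}(X)$ the coefficient is $1$, so $C$ is $\F_q$-irreducible of class $[H]$; either $C$ is absolutely irreducible, in which case $|C(\F_q)| \le 1+q+\pi\lfloor 2\sqrt q\rfloor$ by Hasse--Weil--Serre applied to its normalization using $g(\tilde C)\le p_a(C)=\pi$, or $C$ is geometrically reducible, in which case $C(\F_q)$ is contained in the intersection of the Frobenius conjugates and is even smaller. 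In all cases $\max_{\ell\neq 0}|Z(\ell)\cap X(\F_q)| \le 1+q+\pi\lfloor 2\sqrt q\rfloor$, hence $n - d_1 \le 1+q+\pi\lfloor 2\sqrt q\rfloor$.
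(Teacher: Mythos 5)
Your overall plan matches the paper's: reduce to bounding the number of $\F_q$-points on a hyperplane section, use Proposition~\ref{VZ} with $m=1$ to get $\F_q$-irreducibility, and split into the absolutely irreducible versus geometrically reducible cases. The absolutely irreducible case is handled in the paper by quoting the Aubry--Perret singular Hasse--Weil--Serre bound \cite{ap}; your hand-rolled normalization argument is headed the right direction but as you yourself flag, the normalization map need not be surjective on $\F_q$-points (an $\F_q$-rational node with two branches conjugate over $\F_{q^2}$ is exactly the danger), so ``up to the genus slack'' is doing real work that you haven't done. That is precisely what \cite{ap} makes precise, and it is cleaner to just cite it.

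The genuine gap is in the geometrically reducible case. You assert that if $H'$ is $\F_q$-irreducible but splits as $D_1 \cup \cdots \cup D_r$ over an extension, then $H'(\F_q)$ lies in the intersection of the conjugates and ``is even smaller'' than $1+q+\pi\lfloor 2\sqrt q\rfloor$. This is not obvious and does not follow from anything you cite: a priori $D_i\cdot D_j$ could be large, so the fact that $H'(\F_q)\subseteq (D_i\cap D_j)(\F_q)$ gives no bound until you control $D_i\cdot D_j$. The paper fills this in by a real computation: it applies the adjunction formula $2\pi-2=(K_X+H)\cdot H$, expands $(K_X+D)\cdot D$ using $D=\sum D_i$ and bilinearity to get
\begin{equation*}
2\pi - 2 = \sum_i (2p_a(D_i)-2) + 2\sum_{i<j} D_i\cdot D_j,
\end{equation*}
uses $p_a(D_i)\ge 0$ to bound each summand below by $-2$, and thereby derives $D_i\cdot D_j \le \tfrac{1}{r(r-1)}(2\pi-2)+\tfrac{2}{r-1}$, which it then compares term by term against $1+q+\pi\lfloor 2\sqrt q\rfloor$. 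Without some such argument your ``even smaller'' is unjustified. Also, your sentence ``combined with $\F_q$-generality, this component is actually absolutely irreducible'' is not correct: $\F_q$-generality (Definition~\ref{Fqgen}) concerns the vanishing ideal of $X$ and does not rule out hyperplane sections that are $\F_q$-irreducible but geometrically reducible; the paper never claims this and must genuinely treat that case.
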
 

\begin{proof}
If $f$ is any linear form defined over $\F_q$, then the divisor $D = X \cap V(f)$ is linearly equivalent, 
hence algebraically equivalent, to $H$ on $X$.   Taking $m = 1$ in 
Proposition~\ref{VZ},  $D$ must also be irreducible over $\F_q$.
Any codeword produced by evaluation of a linear form $f$ where $D$ is also absolutely
irreducible will have at most $1 + q + \pi \lfloor 2\sqrt{q}\rfloor$ zero entries
by the form of the Hasse-Weil-Serre bound for singular curves as in \cite{ap}.   

It remains to consider the case that 
$D$ is irreducible over $\F_q$ but not absolutely irreducible.  In this case by 
Theorem 3.2 of \cite{lr}, there is some $r \ge 2$ such that $D = D_1 \cup \cdots \cup D_r$
with all $D_i$ absolutely irreducible and defined over $\F_{q^r}$.   These absolutely
irreducible components are permuted by the action of the cyclic group
${\rm Gal}(\F_{q^r}/\F_q)$.  
By Lemma 2.8 and Proposition 3.8 of \cite{lr}, 
$$D(\F_q) = {\rm Sing}(D)(\F_q) = (D_1 \cap \cdots \cap D_r)(\F_q).$$
Hence for any pair $i\ne j$, 
$$|D(\F_q)| \le D_i \cdot D_j,$$
using the intersection form on $X$.  

By the \emph{adjunction formula} on $X$ (see, for 
instance \cite{H}, Proposition V.1.5), we have
\begin{equation}
\label{adj}
2\pi - 2 = (K_X + H)\cdot H,
\end{equation}
where $K_X$ is the canonical divisor class on $X$.  
On the other hand, the reducible divisor $D$ has the same arithmetic genus as the irreducible
curve $H$  since $D$ is linearly equivalent to $H$.  Since $D = D_1 +\cdots + D_r$ as divisors
on $X$, the bilinearity
of the intersection form gives
\begin{align}
\label{arithgen}
\nonumber
2\pi - 2 &= (K_X + D) \cdot D\\ 
&= \sum_{i = 1}^r (K_X + D_i)\cdot D_i  + 2\sum_{i < j} D_i \cdot D_j\\
&= \sum_{i = 1}^r (2p_a(D_i) - 2)  + 2\sum_{i < j} D_i \cdot D_j.  \nonumber
\end{align}
The $D_i$ are irreducible over the algebraic closure of $\F_q$ so their arithmetic genus
satisfies $p_a(D_i) \ge 0$.  Hence for all $i$, 
$$2 p_a(D_i) - 2 = (K_X + D_i)\cdot D_i \ge -2.$$
Hence rearranging the last equality from \eqref{arithgen}, we see for all pairs $i \ne j$, 
\begin{equation}
\label{exc}
D_i \cdot D_j \le \frac{1}{r(r - 1)} \left(2\pi - 2 + 2r\right)  = \frac{1}{r(r-1)}(2 \pi - 2) + \frac{2}{r-1}.
\end{equation}
Comparing the Hasse-Weil-Serre upper bound and the bound from \eqref{exc}, since $r \ge 2$ and 
$\frac{2}{r - 1} \le 2$, 
$$ \frac{1}{r(r-1)}(2 \pi - 2) <  \pi \lfloor 2\sqrt{q}\rfloor $$
and 
$$\frac{2}{r - 1} < 1 + q$$
for all $q$.  This shows that any curves in the class of $H$ that are irreducible over $\F_q$, but not absolutely
irreducible, have fewer $\F_q$-rational points than the 
Hasse-Weil-Serre upper bound for absolutely irreducible curves.  
\end{proof}

\begin{exm}
An example related to the proof of Theorem~\ref{D1Th} comes from the elliptic quadric surfaces $X$
in $\P^3$.  The Picard number is $\rho(X) = 1$ in this case and the N\'eron-Severi group is generated by the 
class of a smooth plane section (a plane conic).  There are also tangent planes intersecting
$X$ in pairs of lines, but these lines are only defined over the quadratic extension of 
$\F_q$.  Hence we have reducible divisors $D = D_1 + D_2$ in this class with $D_i$ lines
over the algebraic closure of $\F_q$.  The only $\F_q$-rational point on such a section is the point of tangency (where
the two lines meet).  The upper bound in \eqref{exc} is achieved in this case. This is the situation described in Example~\ref{nongen}.
\end{exm}

We also consider the codes $C(X,s,\F_q)$ for $s > 1$ and set $d_s = d(C(X,s,\F_q))$. 
Our main observation here is that, at least when $q$ is sufficiently large, the minimum distance 
of  $C(X,s,\F_q)$ is, to some extent, controlled by the minimum distance of $C(X,1,\F_q)$. 
The simplest statement here concerns $n - d_s$, the largest number
of zeroes in any nonzero codeword.   

\begin{thm}
\label{distbds}  Under the hypotheses of Theorem~\ref{D1Th}, if $q$ is sufficiently large, then
$$n - d_s \le s(n - d_1).$$
\end{thm}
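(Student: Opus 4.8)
The plan is to estimate $n-d_s$, the largest number of zero entries in a nonzero codeword, by studying the divisor that the corresponding degree-$s$ form cuts on $X$, and to show it cannot have many more $\F_q$-points than $s$ hyperplane sections each vanishing as much as possible. First, fix a nonzero codeword of maximal weight; it equals $ev_{\mathcal S}(f)$ for some $f\in\F_q[x_0,\dots,x_r]_s$ that does not vanish on all of $X(\F_q)$, hence (since $X$ is absolutely irreducible) does not vanish on $X$. Its zero divisor $D=(f)_0$ on $X$ is then a genuine effective divisor with $[D]=[sH]$ in ${\rm NS}(X)$, and the zero entries of the codeword are exactly the points of $X(\F_q)$ lying on the support of $D$; so $n-d_s=|D(\F_q)|$ for a suitable such $D$.

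Next, decompose $D=a_1D_1+\cdots+a_\ell D_\ell$ into distinct $\F_q$-irreducible components with $a_i\ge 1$. Since $[H]$ generates ${\rm NS}(X)$ and $H$ is ample, each $[D_i]=b_i[H]$ with $b_i\ge 1$, and intersecting with $H$ gives $\sum_i a_ib_i=s$, so in particular $\sum_i b_i\le s$; Proposition~\ref{VZ} applied with $m=s$ also gives $\ell\le s$. Because the support of $D$ is $\bigcup_i D_i$, we get $|D(\F_q)|\le\sum_{i=1}^\ell|D_i(\F_q)|$, so it suffices to prove, for $q$ large, that $|D_i(\F_q)|\le b_i(n-d_1)$ for every $i$.

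I would then bound $|D_i(\F_q)|$ curve by curve, splitting into cases as in the proof of Theorem~\ref{D1Th}. If $b_i=1$, then $D_i$ is linearly equivalent to $H$ (in the cases of interest ${\rm Pic}(X)={\rm NS}(X)$), hence is a hyperplane section, so $|D_i(\F_q)|\le n-d_1$ by the definition of $d_1$. If $b_i\ge 2$ and $D_i$ is absolutely irreducible, the adjunction formula on $X$ shows $2p_a(D_i)-2=(K_X+b_iH)\cdot b_iH=b_i(2\pi-2)+b_i(b_i-1)H^2$, so $p_a(D_i)$ is a constant independent of $q$, and the form of the Hasse--Weil--Serre bound for singular curves from \cite{ap} gives $|D_i(\F_q)|\le q+1+p_a(D_i)\lfloor 2\sqrt q\rfloor$. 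If $D_i$ is $\F_q$-irreducible but not absolutely irreducible, the argument in the proof of Theorem~\ref{D1Th}, via \cite{lr}, bounds $|D_i(\F_q)|$ by an intersection number $D_{i,1}\cdot D_{i,2}$ of two of its Galois-conjugate absolutely irreducible components, and adjunction over $\overline{\F_q}$ bounds this by another constant of roughly the size of $p_a(D_i)$.

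Finally I would assemble these. The hypothesis that $H$ is itself an \emph{absolutely irreducible} hyperplane section gives, via the lower Hasse--Weil--Serre-type bound, $n-d_1\ge q+1-\pi\lfloor 2\sqrt q\rfloor$. Hence for $q$ large enough---depending only on $s$ and on the fixed invariants $\pi$ and $H^2$---one has $q+1+p_a(D_i)\lfloor 2\sqrt q\rfloor\le 2\bigl(q+1-\pi\lfloor 2\sqrt q\rfloor\bigr)\le b_i(n-d_1)$ in the $b_i\ge 2$ cases, with the non-absolutely-irreducible components bounded by even less; combined with the $b_i=1$ case this yields $|D_i(\F_q)|\le b_i(n-d_1)$ for all $i$, and summing gives $|D(\F_q)|\le\bigl(\sum_i b_i\bigr)(n-d_1)\le s(n-d_1)$. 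The hard part will be making the "$q$ sufficiently large" threshold uniform---one must verify the comparison $q+1+p_a(D_i)\lfloor 2\sqrt q\rfloor\le b_i(n-d_1)$ over the finitely many possible decomposition types and interpret the hypotheses so that $n-d_1$ really is $q+1-O(\sqrt q)$; the $b_i=1$ case is delicate in principle (a curve algebraically equivalent to $H$ need not a priori be a hyperplane section), but under the running hypotheses on $X$ it is, and then it is immediate.
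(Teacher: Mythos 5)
Your proof reaches the correct conclusion, but by a genuinely different route from the paper. The paper bounds $|D(\F_q)|$ for a degree-$s$ section $D$ directly, using the Aubry--Perret estimate $\bigl||D(\F_q)| - (rq + 1)\bigr| \le 2\pi_D\sqrt q$, which depends only on the number $r$ of absolutely irreducible $\F_q$-components of $D$ and on the arithmetic genus $\pi_D$ of the whole section (fixed by $s$); comparing that upper bound for $r<s$ against the lower bound it gives for sections that split as a union of $s$ hyperplane sections, the paper concludes that for $q$ large the extremal $D$ has exactly $s$ components, each in the class of $H$ and hence a hyperplane section, and the union bound then yields $|D(\F_q)| \le s(n-d_1)$. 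You instead work component by component: writing $D = a_1D_1 + \cdots + a_\ell D_\ell$, you use $\sum a_ib_i = s$ to get $\sum b_i \le s$ and reduce to showing $|D_i(\F_q)| \le b_i(n-d_1)$ for each $i$, which you prove via adjunction and the Hasse--Weil--Serre bound for singular curves on the $b_i\ge 2$ components, compared against the explicit lower bound $n - d_1 \ge |H(\F_q)| \ge q+1-O(\sqrt q)$. This dispenses with the global Aubry--Perret bound on $D$ and with any argument about where the extremum sits, at the cost of invoking the lower bound on $n-d_1$ explicitly (the paper uses it only implicitly, in asserting that sections with $s$ hyperplane-section components exist). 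The one delicate step — that in the $b_i = 1$ case a curve algebraically equivalent to $H$ is actually a hyperplane section — you flag explicitly; note that the paper's proof contains the same implicit inference when it asserts ``hence each $D_i$ is a hyperplane section,'' so this is not a gap particular to your argument but a shared appeal to the standing hypotheses on $X$ (irregularity zero so ${\rm Pic}={\rm NS}$, and projective normality as in Remark~\ref{projnormal}).
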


\begin{proof}
Consider the divisor $D = X \cap V(f)$ for $f$ of degree $s$.
We begin with the case that 
$D = X \cap V(f)$ has a decomposition $D = D_1 \cup \cdots \cup D_r$
into absolutely irreducible components defined over 
$\F_q$.  By Corollary 3 of \cite{AP2}, 
$$|D(\F_q) - (r q + 1)| \le 2\pi_D \sqrt{q},$$
where $\pi_D$ is the arithmetic genus (which depends only on $s$).   
We note that the curve $D$ is absolutely connected because
$D$ is an ample divisor on $X$ (see Corollary III.7.9 of [H]).  
 Among the forms of degree $s$ are those that factor over 
$\F_q$ into products of linear forms.  
By arguments similar to those we used in \cite{ls},  
it is not difficult to see that when $q$ is sufficiently large,
 the lower bound 
$$s q + 1 - 2 \pi_D \sqrt{q}$$
on $|D(\F_q)|$ for sections with $s$
components is greater than the upper bound 
$$r q + 1 + 2 \pi_D \sqrt{q}$$
for sections with $r$ components for all $r < s$.  
The maximum possible number of points on a
reducible $D$ with a decomposition of this form is attained for 
some $D$ with $r = s$  and hence each $D_i$ is a hyperplane section.  Hence $D$
has at most $s(n - d_1)$ $\F_q$-rational points, and the corresponding
codewords have at most $s(n - d_1)$ zero entries.  

It remains to consider the case where $D = X \cap V(f)$ has some
$\F_q$-irreducible component that is not absolutely irreducible.  
But then the argument given in the proof of Theorem~\ref{D1Th}
shows that the largest possible number of $\F_q$-points on any such 
component can only decrease relative to the case where that $\F_q$-irreducible 
component is absolutely irreducible.

Hence when $q$ is large enough, 
we will have $n - d_s \le s(n - d_1)$.
\end{proof}

We will not address the problem of determining 
precise bounds on the $q$ for which the conclusion of Theorem~\ref{distbds} holds in this article because
this tends to depend on particular properties of the surface $X$ and the divisor $D$.
To conclude this section, we record an observation that may be of independent interest.  It 
is possible to use these properties of evaluation codes to deduce properties of the surfaces 
they come from.

\begin{cor} Let $X$ be a smooth projective surface defined over $\F_q$ and assume that
the bound given in \eqref{d1bd} does not hold.  Then the class of the hyperplane
section $H$ does not generate ${\rm NS}(X)$.
\end{cor}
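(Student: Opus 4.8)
The statement to prove is the contrapositive of Theorem~\ref{D1Th}. The plan is simply to observe this directly. Theorem~\ref{D1Th} asserts: if the class of $H$ generates ${\rm NS}(X)$ over $\F_q$, then $n - d_1 \le 1 + q + \pi \lfloor 2\sqrt{q}\rfloor$, i.e. the bound \eqref{d1bd} holds. The corollary states: if the bound \eqref{d1bd} fails, then $H$ does not generate ${\rm NS}(X)$. This is logically the contrapositive of the implication ``$H$ generates ${\rm NS}(X)$'' $\Rightarrow$ ``\eqref{d1bd} holds,'' so the proof is a one-line deduction.

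First I would state that the result is immediate from Theorem~\ref{D1Th}: were $H$ to generate ${\rm NS}(X)$, the hypotheses of that theorem would be satisfied (recalling our standing assumption from \S3 that $X$ is smooth, projective, and absolutely irreducible, and that a hyperplane section $H$ of an $\F_q$-general surface is absolutely irreducible when its class generates ${\rm NS}(X)$, by Proposition~\ref{VZ} with $m=1$), and hence \eqref{d1bd} would hold, contrary to assumption. Therefore $H$ cannot generate ${\rm NS}(X)$.

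There is no real obstacle here; the only subtlety is bookkeeping about hypotheses. One should make sure that the ambient assumptions of Theorem~\ref{D1Th} — in particular that $H$ is an \emph{absolutely irreducible} $\F_q$-hyperplane section — are genuinely consequences of ``$H$ generates ${\rm NS}(X)$'' together with the standing hypotheses, rather than extra assumptions. Since absolute irreducibility of a divisor in a class that generates the N\'eron–Severi group follows from the $m=1$ case of Proposition~\ref{VZ} (combined with the argument in the proof of Theorem~\ref{D1Th} ruling out the $\F_q$-irreducible-but-not-absolutely-irreducible case), the contrapositive goes through cleanly, and I would present the corollary's proof in essentially a single sentence.

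\begin{proof}
This is the contrapositive of Theorem~\ref{D1Th}. Suppose, for contradiction, that the class of $H$ generates ${\rm NS}(X)$ over $\F_q$. Then $X$ satisfies all the hypotheses of Theorem~\ref{D1Th}, so the bound \eqref{d1bd} holds, contrary to our assumption. Hence the class of $H$ does not generate ${\rm NS}(X)$.
\end{proof}
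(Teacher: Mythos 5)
Your approach is the same as the paper's, which also treats the corollary as essentially the contrapositive of Theorem~\ref{D1Th}; the paper simply unpacks the contrapositive by noting that the failure of \eqref{d1bd} forces some hyperplane section to be reducible, and the irreducible components of such a section then furnish explicit N\'eron--Severi classes outside the subgroup generated by $[H]$. So the proofs agree in substance, with the paper's version being marginally more constructive about which classes witness the conclusion.

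One caveat about your justification, though it does not break the argument. Your parenthetical claim that a hyperplane section of an $\F_q$-general surface ``is absolutely irreducible when its class generates ${\rm NS}(X)$, by Proposition~\ref{VZ} with $m=1$'' is not correct: Proposition~\ref{VZ} with $m=1$ only yields $\F_q$-irreducibility, not absolute irreducibility. Indeed the elliptic quadric example in the paper has $\rho(X)=1$ with ${\rm NS}(X)$ generated by the plane section class, yet the tangent plane sections split into two conjugate lines over $\F_{q^2}$ and so are $\F_q$-irreducible but not absolutely irreducible. The point that saves the contrapositive is different: the proof of Theorem~\ref{D1Th} does not require absolute irreducibility of any given section, because it bounds the number of $\F_q$-rational points separately in the absolutely irreducible case (Hasse--Weil--Serre) and in the $\F_q$-irreducible-but-not-absolutely-irreducible case (the argument via \eqref{exc}), so the conclusion \eqref{d1bd} holds whenever $[H]$ generates ${\rm NS}(X)$. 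With that correction, the contrapositive deduction is clean and matches the paper.
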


\begin{proof}  If \eqref{d1bd} does not hold, then some hyperplane section of $X$
must be reducible and the irreducible components give elements of ${\rm NS}(X)$ that
are not in the subgroup generated by the class of $H$.  
\end{proof}

In the situation of the corollary, if the components come from different algebraic 
equivalence classes, then in particular $\rho(X) > 1$.

\section{Picard number 1 (or small) is not enough}

A first obstacle to making use of these ideas is simply the problem of finding 
explicit surfaces with small Picard number.  

\subsection{Codes from smooth cubic surfaces in $\P^3$}

The article \cite{z} uses smooth cubic surfaces in $\P^3$ as a test case.
In this subsection, we provide some more detailed information about these codes and show that
an example from \cite{z} is actually the best possible for codes from these surfaces in the 
situation considered there.

Over the algebraic closure of $\F_q$, a smooth cubic surface is obtained as the blow-up 
of $\P^2$ in six points in general position and always has 27 lines as in the classical situation 
over $\CC$.  Moreover, the orthogonal complement of the canonical class in the Picard
group can be identified with the $E_6$ lattice and the Weyl group of $E_6$ acts on the lines.
However the lines may only be defined over an extension of $\F_q$ and some
cubics contain no $\F_q$-rational lines.  The article \cite{sd} gives a 
classification of the possibilities according to the action of the Frobenius automorphism on the set 
of lines.  That classification is based on the conjugacy class of the Weyl group 
containing the Frobenius automorphism.  According to Table 1 of \cite{sd}, there are exactly 
five possible types of cubics with $\rho(X) = 1$; such an $X$ contains no $\F_q$-rational 
lines or conics.  The types are denoted
as follows:  

\begin{equation}
\label{eq:CubicsRhoOne}
\begin{array}{cccc}
      {\rm Class} & {\rm Permutation\ Type} & N_1 = |X(\F_q)| & {\rm ord}(\eta_j) \\ \hline
          C_{10} & \{3,6^3,6\} & q^2 - q + 1 & 2,2,3,3,6,6\\
          C_{11} & \{3^9\}     & q^2 - 2q + 1& 3,3,3,3,3,3\\
          C_{12} & \{3,6^4\}   & q^2 + 2q + 1& 3,3,6,6,6,6\\
          C_{13} & \{3,12^3\}  & q^2 + 1 & 3,3,12,12,12,12\\ 
          C_{14} & \{9^3\}     & q^2 + q + 1 &9,9,9,9,9,9\\
\end{array}
\end{equation}

The second column indicates how the 27 lines on the surface over the algebraic closure
are permuted by the Frobenius automorphism of $X$.   
To give some idea of the intricacies here, on $C_{10}$ and $C_{12}$ cubics there is one orbit of length 
3 consisting of three coplanar lines defined over $\F_{q^3}$; the plane containing them is hence defined over $\F_q$.
The other 24 lines on a $C_{10}$ or $C_{12}$ are permuted in 4 orbits of size 6.   But those orbits of length 6
have different structures.  According to a discussion on p. 59 of \cite{sd}, on 
a $C_{10}$ cubic, three of the orbits consist of three skew lines and their three
transversals, but none of the orbits on a $C_{12}$ has that form.  On the other hand, one
of the orbits on a $C_{12}$ consists of two triangles in planes interchanged by the square of the Frobenius.
(Those triangles split over $\F_{q^6}$.)  

In the last column, the $\eta_j$ give a description of the 6 nontrivial reciprocal roots of 
the factor in the denominator of the zeta function of $X$ corresponding to the $H^2$ cohomology 
group:  $\alpha_j = \eta_j q$,
where $\eta_j$ is a primitive ${\rm ord}(\eta_j)$th root of unity.  From 
the specific shape of the zeta function
for smooth cubics in $\P^3$ (in particular, the fact that both factors corresponding to the $H^1$ and $H^3$ 
cohomology are trivial),
knowing only the $\eta_j$ allows us to compute 
$N_r = |X(\F_{q^r})|$ for all $r \ge 1$.    

A recent paper by Rybakov and Trepalin, \cite{rt}, studies when cubic surfaces of these
types exist over particular finite fields.

\begin{expr}
With a random search, we found cubic surfaces of each of these types over $\F_7$ and constructed
the codes $C(X, k, \F_7)$ for $k = 1, 2$.  The following list gives parameters for the corresponding 
codes for $k=1$. The best possible 
$d$ values are taken from \cite{g}.

\begin{itemize}
\item $C_{10}$ -- $[43,4,30]$, $[43,4,31]$ (best possible $d$ for $n = 43, k = 4$ is $d=35$)
\item $C_{11}$ -- $[36,4,23]$, $[36,4,24]$ examples (best possible $d$ is $28 \le d \le 29$)   
\item $C_{12}$ -- $[64,4,51]$  (best possible $d$ is $52 \le d \le 53$)
\item $C_{13}$ (very rare) -- found  $[50,4,37]$  (best possible $d = 42$)
\item $C_{14}$ (rare) -- found $[57,4,44]$ (best possible $d = 47$)
\end{itemize}

The frequent occurrence of the number $n - d = 13$ here can be explained as follows.
Recall the Hasse-Weil-Serre bound:  If $X$ is a smooth curve of genus $g$ over $\F_q$, then 
$$|X(\F_q) - (1 + q)| \le g\cdot \lfloor 2\sqrt{q}\rfloor.$$
Hence an upper bound on the number of $\F_7$-points on 
a smooth plane cubic is $1 + 7 + \lfloor 2\sqrt{7}\rfloor = 13$ and this
bound is achieved, for instance, for the Weierstrass form cubic
$y^2z = x^3 + 3z^3$.  
Moreover, singular (but irreducible) plane sections all have either $q = 7$ (``split'' node), 
$q + 1 = 8$ (cusp), or $q + 2 = 9$ (``non-split'' node)
$\F_7$-points.  Note that some of the $C_{10}$ and $C_{11}$ surfaces have no plane sections 
with $13$ $\F_7$-points, though.  One more general observation is that if one delves into
the deeper structure of these codes, for instance by using {\tt Magma} to compute the weight 
enumerators for a large collection of codes from cubics, one striking feature is how 
\emph{variable} they are.  Even among codes with the same $d$, there will be many different 
nonequivalent codes.   This is a reflection of Proposition~\ref{monequiv} from the last section, 
of course.  The $C(X,1,\F_q)$ codes effectively encode all the structure of $X$ up to projective 
equivalence.
\end{expr}
  
As Zarzar observed, among the cubics with Picard number 1, the $C_{12}$ cubics are clearly 
the best for construction of codes.
This is also provides some confirmation of Zarzar's \emph{Ansatz} concerning the Picard number.  
Cubics with $\rho(X) > 1$ contain lines rational over $\F_q$ and hence
have reducible plane sections with 2 or more irreducible rational components.  The $C(X,1,\F_q)$
codes will have $n - d > 13$, though $n$ can also be as large as $q^2 + 7q + 1$.  

Evaluating quadrics, Zarzar also reported a code $C(X, 2, \F_7)$ from a $C_{12}$ cubic surface
with parameters $[64, 10, 38]$.  Here, following the pattern from Proposition~\ref{distbds}, 
the minimum weight codewords come from reducible quadrics 
intersecting the cubic $X$ in reducible curves with two components, both smooth plane cubics 
with $13$ $\F_7$-rational points.  Moreover, the line of intersection of the two planes meets the cubic 
in three points rational over $\F_{q^3}$, so the resulting codewords have exactly $13 + 13 = 26$ 
zeroes.  By contrast, smooth intersections of a cubic and a quadric in $\P^3$ are curves of 
genus 4 with at most 24 $\F_7$-rational points, according to \cite{mp}.  The Hasse-Weil-Serre upper
bound $1 + 7 + 4\cdot \lfloor 2\sqrt{7}\rfloor = 28$ is not achieved in this case.
  
A natural question to ask here is how well the codes from $C_{12}$ cubics 
might do over other, larger fields.  Based on a large amount of experimental data, 
we offer the following conjecture that bears on this question.  
We say an irreducible plane cubic curve is {\it optimal} over $\F_q$ if it contains
the largest possible number of $\F_q$-rational points over all plane cubics.  As above
with $g = 4$, the Hasse-Weil-Serre bound is not always achieved when $g = 1$.  But 
compared to the higher genus case, much more is known about when the bound is not reached.
In particular, when $g = 1$ and the bound is not reached, there exist curves with
$q + \lfloor 2\sqrt{q}\rfloor$ $\F_q$-rational points, that is, just one less than the bound.
See \cite{aper}.  

\begin{conj}
\label{C12Conj}
For all $q \ge 7$, $C_{12}$ cubic surfaces over $\F_q$ always contain 
optimal cubic plane sections.
\end{conj}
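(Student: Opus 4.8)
The plan is to establish the stronger arithmetic statement that a $C_{12}$ cubic surface $X/\F_q$ always carries a plane section that is a smooth plane cubic with the maximal number of $\F_q$-points, by producing such a section explicitly from the geometry recorded in \eqref{eq:CubicsRhoOne}. First I would recall that for a $C_{12}$ cubic we have $N_1 = |X(\F_q)| = q^2 + 2q + 1 = (q+1)^2$, and that the reciprocal roots $\alpha_j = \eta_j q$ of the $H^2$-factor of the zeta function are $q$ times sixth roots of unity in the pattern $(1,1,-1,-1,\zeta_6^{\pm 1})$ type from the ${\rm ord}(\eta_j)$ column. Since the $H^1$- and $H^3$-factors are trivial, the number of $\F_q$-points on any plane section $C = X \cap V(f)$ with $f$ linear over $\F_q$ is governed by the restriction of Frobenius to $H^1(C)$, and for the purposes of counting we can sort plane sections into smooth ones (arithmetic genus $1$) and singular-but-irreducible ones (geometric genus $0$, giving $q$, $q+1$, or $q+2$ points as in the $C_{11}$ discussion above). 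By Theorem~\ref{D1Th} applied with $\pi = 1$, every plane section of a $C_{12}$ cubic is absolutely irreducible, so there are no further cases to worry about: the only question is whether at least one smooth plane section attains $q + \lfloor 2\sqrt q\rfloor$ or $q + 1 + \lfloor 2\sqrt q\rfloor$ $\F_q$-points, i.e.\ is optimal in the sense defined before the conjecture.

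The core of the argument is an averaging (first-moment) computation over the dual $\P^3$ of hyperplanes defined over $\F_q$. I would count, in two ways, the quantity $\sum_{H \in (\P^3)^\vee(\F_q)} |(X\cap H)(\F_q)|$: on one hand it equals $\sum_{P \in X(\F_q)} \#\{H : P \in H\} = (q^2+q+1)\,N_1 = (q^2+q+1)(q+1)^2$, since each $\F_q$-point of $\P^3$ lies on $q^2+q+1$ hyperplanes; on the other hand it is the sum of $|(X\cap H)(\F_q)|$ over the $q^3+q^2+q+1$ rational hyperplanes. Dividing, the \emph{average} number of $\F_q$-points on a plane section is $(q^2+q+1)(q+1)^2 / (q^3+q^2+q+1) = (q^2+q+1)(q+1)/(q^2+1)$, which is $q + 1 + (q-1)/(q^2+1)$ — strictly greater than $q+1$. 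A crude bound on the number of singular sections (the discriminant hypersurface in $(\P^3)^\vee$ has degree bounded in terms of $\deg X = 3$, independent of $q$, so it contains $O(q^2)$ rational points) shows singular sections contribute $O(q^2) \cdot (q+2)$ to the sum, i.e.\ a vanishing fraction; hence the average over \emph{smooth} sections is still $q + 1 + \Omega(1/q)$ asymptotically, and in particular exceeds $q+1$ once $q$ is not too small. Since a smooth plane cubic has at most $q + 1 + \lfloor 2\sqrt q\rfloor$ points and, when the Hasse--Weil--Serre bound is not reached, at most $q + \lfloor 2\sqrt q\rfloor$ (by \cite{aper}, as quoted), an average exceeding $q+1$ forces the existence of a smooth section with at least $q+2$ points; a sharper second-moment estimate, or a direct bound on how far below average a smooth section can sit, should push this all the way to the optimal value. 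For the finitely many small $q \ge 7$ not covered by the asymptotic estimate (very likely just $q = 7, 8, 9$, and perhaps a handful more), I would verify the claim by direct computation over those fields, noting that the $q = 7$ case is already exhibited in the Experimental Results above, where a $C_{12}$ cubic over $\F_7$ has plane sections with the optimal $13$ points.

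The main obstacle I anticipate is making the averaging argument yield not merely ``more $\F_q$-points than average on some smooth section'' but genuinely the \emph{optimal} count; the gap between $q+2$ and $q + 1 + \lfloor 2\sqrt q\rfloor$ is of order $\sqrt q$, so a first-moment bound alone is far too weak, and I would need to control the \emph{distribution} of point counts over the pencil/net of plane sections. The natural tool is that the plane sections through a fixed $\F_q$-point of $X$ form a $\P^2$ of hyperplanes, and the Frobenius eigenvalues on $H^1$ of the members of this family vary in a way constrained by the global eigenvalues $\eta_j q$ on $H^2(X)$ — concretely, one expects a Lefschetz-pencil-type relation forcing the ``defect'' $q + 1 - |(X\cap H)(\F_q)|$ to take both signs and attain its extreme negative value. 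Turning this expectation into a clean proof valid over small fields (rather than an asymptotic-in-$q$ statement) is where I expect the real work to lie; absent that, the honest fallback is the asymptotic result plus a finite computer check, which already establishes the conjecture for all but a bounded explicit set of $q$.
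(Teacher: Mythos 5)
There is a fundamental gap here, and the first thing to flag is that Conjecture~\ref{C12Conj} is genuinely a \emph{conjecture} in the paper: the authors do not prove it, but only verify it exhaustively for $q \le 9$ (via the Cayley--Salmon normal form \eqref{eq:CayleySalmon}) and by random search for $q \le 37$. So there is no ``paper proof'' to compare against, and the right question is whether your argument stands on its own. It does not. The first-moment calculation, though a reasonable instinct, is off by a tiny constant (the average number of $\F_q$-points on a plane section works out to $q + 2 + (q-1)/(q^2+1)$, not $q+1 + (q-1)/(q^2+1)$, since $(q+1)(q^2+q+1) = (q+2)(q^2+1) + (q-1)$), but the real trouble is the one you yourself identify: an average of $\approx q + 2$ only forces the existence of a smooth section with a handful more points than the typical count, whereas an \emph{optimal} plane cubic has $q + 1 + \lfloor 2\sqrt{q}\rfloor$ points --- a gap of size $\sim \sqrt{q}$ that no first-moment argument can cross. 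Your claimed fallback, ``the asymptotic result plus a finite computer check already establishes the conjecture for all but a bounded explicit set of $q$,'' is therefore not true: the asymptotic averaging estimate does not establish the conjecture for any $q$, let alone all large $q$.

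The repairs you gesture at are also not enough. A second-moment bound controls the variance of the point count over the net of hyperplane sections, and since the standard deviation of Frobenius traces in such a family is itself $O(\sqrt{q})$, it could at best yield sections with $q + c\sqrt{q}$ points for some unspecified $c < 2$; it cannot certify that the extreme value $\lfloor 2\sqrt{q}\rfloor$ is actually attained, which is an exact Diophantine condition, not an approximate one. The Lefschetz-pencil heuristic is similarly a statement about the distribution of traces (essentially a Sato--Tate expectation) and does not by itself produce a section hitting the extremal trace over a fixed small field. You would need a genuinely different idea --- for instance, exploiting the explicit Cayley--Salmon normal form \eqref{eq:CayleySalmon} to exhibit a plane whose section is a prescribed optimal elliptic curve, or an argument using the structure of the Weyl-group conjugacy class $C_{12}$ to constrain which isogeny classes of elliptic curves appear. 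Separately, your appeal to Theorem~\ref{D1Th} to conclude that every plane section of a $C_{12}$ cubic is absolutely irreducible misreads that theorem (it bounds $n - d_1$, it does not assert absolute irreducibility); the relevant statement is Zarzar's Lemma 2.1 quoted in \S3, and even that carries a hypothesis on the characteristic that you would need to address for $q$ a power of $3$.
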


It is possible to normalize the equations of $C_{12}$ cubics using the description of the 
lines on these surfaces given in the paragraph after the table in Equation~\eqref{eq:CubicsRhoOne} above.
Up to projective equivalence in $\P^3$, 
we can put the triangle orbit in the plane $w = 0$, so the plane section 
$X \cap V(w)$ is defined by an equation of the form $L \cdot F(L) \cdot F^2(L)$
where $L$ is a linear form in $x,y,z$ with coefficients
in $\F_{q^3}$.  Letting  $M$ be a second linear form in $x,y,z,w$ with coefficients in $\F_{q^2}$, then 
any $C_{12}$ cubic is projectively equivalent to one given by an equation of the form
\begin{equation}
\label{eq:CayleySalmon}
L\cdot F(L) \cdot F^2(L) = w\cdot M \cdot F(M).
\end{equation}
This is a special case of the \emph{Cayley-Salmon form} of the equation of a cubic
surface studied in classical algebraic geometry; see \cite{do}.
The Frobenius orbit of the line defined by $L = M = 0$ consists of the 6 lines defined by 
\begin{align*} 
L = M = 0, &\quad
F(L) = F(M) = 0\\
F^2(L) = M = 0, &\quad
L = F(M) = 0\\
F(L) = M =0, &\quad
F^2(L) = F(M) = 0.
\end{align*}
The first, third and fifth of these lie in the plane $M = 0$ and the other three lie in $F(M) = 0$, 
and those two triangles are interchanged by $F$.  

\begin{expr}
By considering equations of the form \eqref{eq:CayleySalmon}, it is feasible, when $q$ is small,
to generate surfaces in every projective equivalence class over $\F_q$ by varying the coefficients in the 
linear forms $L$ and $M$, then test all of them for existence of optimal hyperplane sections.  Using this approach, we 
have verified the conjecture for $q \le 9$.  
We have also done extensive random searches for all $q \le 37$ that have produced no 
counterexamples.  We hope to return to this conjecture in future work.  

Over $\F_7$, our results also show that there are sufficiently many such optimal plane 
sections that there always exist two in distinct planes whose line of intersection meets 
the cubic in no points rational over $\F_7$.  Hence the $C(X,2,\F_7)$ codes from $C_{12}$ cubics 
{\it always} have the parameters $[64,10,38]$ found by Zarzar.  
\end{expr}

Since the best known code from \cite{g} with
$n = 64, k = 10$ over $\F_7$ has $d = 41$, we have another 
strong heuristic indication that {\it in addition to limiting the Picard number, the genus of the 
components of maximally reducible curves on $X$ in the linear system giving the evaluation
functions must also be controlled if the construction of this paper is to produce really good codes}.   
Similarly, if the conjecture above holds in general, we expect that (using the notation from
Theorem~\ref{distbds}) the $C(X,2,\F_q)$ codes will always have $n - d_2 = 2(n-d_1)$, 
i.e. twice the optimal number of $\F_q$-rational points on a curve of genus 1.  

\subsection{Higher degree surfaces in $\P^3$}
 
We briefly indicate some of the features of the construction of this paper that can prevent it
from producing good codes from higher-degree surfaces.  The main issue is that even
if we can control the presence of reducible plane sections, if $X$ is a smooth 
surface of degree $m$ and $H$ is a plane, then provided that $X \cap H$ is 
a smooth curve, it has genus $g = \frac{(m-1)(m-2)}{2}$.  For $q$ fixed, 
the Hasse-Weil-Serre upper bound on $|(X\cap H)(\F_q)|$ used to derive \eqref{d1bd}
grows like a constant times $m^2$ and the minimum distance of the resulting codes often seems 
to decrease too rapidly with $m$ to produce codes better than the best known examples.  

\begin{expr} 
For any $m \ge 5$, consider the surface in $\P^3$ given by 
\[
X_m = V(w^m+xy^{m-1}+yz^{m-1}+zx^{m-1}).
\]
Our motivation for looking at surfaces of this form was that in \cite{s}, Shioda 
proved that these have Picard number one over $\CC$.  On the other hand, for $m=4$, 
this equation defines a $K3$ surface with geometric Picard number 20.
Since these surfaces are defined over $\ZZ$, we can also consider their reductions mod $p$ 
for any prime $p$ and use them to construct codes, although the Picard numbers of the 
reduced surfaces may be larger than 1.  

Even in the case $m = 4$, there are $q$ for which the surface $X_4$ 
contains no lines or conics defined over $\F_q$ and hence has no reducible plane sections.
For instance, with $q = 11$ and $k=1$, we obtain a code $C(X_4,1,\F_{11})$
with parameters $[144,4,120]$.  The minimum weight codewords come from smooth 
plane quartic curves ($g = 3$) with $24$ points rational over $\F_{11}$.  These are optimal
for $g = 3$ according to \cite{mp}.  The 
optimal number of $\F_{11}$-rational points on a curve with $g = 1$ is $18$.  Hence
codes from a $C_{12}$ cubic surface over $\F_{11}$ containing optimal cubic plane sections 
has the same $n = 144$ and $k = 4$, but $d = 144 - 18 = 126$.   

The cases $m \ge 5$ are similar.  For instance, the surface $X_5$ over $\F_9$ contains 
no $\F_9$-rational lines or conics.  The $C(X_5,1,\F_9)$ code has parameters $[91, 4, 71]$ and  
the nonzero words of minimum weight come from irreducible plane quintic curve sections.
Although curves of genus $g = 6$ with as many as $32$ $\F_9$-rational points are known according
to \cite{mp}, none of those are contained in this surface.  But the best known and best possible
$d$ in this cases is $d = 79$ according to \cite{g}.
\end{expr}

\begin{expr} 
In \cite{v}, van Luijk provides examples for the case not covered in Shioda's results, 
giving an explicit family of quartic $K3$ surfaces in $\P^3$ over $\QQ$ with geometric 
Picard number one.  (Although such surfaces were long known to exist, the fact that this
example was only published in 2007 is a reflection of the difficulty of finding
explicit examples.)  
Let
$$
\begin{array}{ccc}
f_1 &=& x^3-x^2y-x^2z+x^2w-xy^2-xyz+2xyw+xz^2+2xzw+y^3+y^2z\\
    & &-y^2w+yz^2+yzw-yw^2+z^2w+zw^2+2w^3\\
f_2 &=& xy^2+xyz-xz^2-yz^2+z^3\\
g_1 &=&z^2+xy +yz \\
g_2 &=& z^2+xy\\
\end{array}
$$
Then for any homogeneous quartic $h$, the surface 
\[
wf_1+2zf_2 - 3g_1g_2 +6h =0 
\]
is a smooth $K3$ surface with Picard number one over $\QQ$.
Once again, if $h$ has coefficients in $\ZZ$, we can look at reductions of such surfaces 
mod $p$ and use them to construct codes. 

We generated 50 such surfaces $X$ randomly over $\F_7$.  Of these, 38 were smooth, and those
surfaces had a wide range of values of $|X(\F_7)|$ -- as small as $38$ and as large as
$69$.  The corresponding $C(X,1,\F_7)$ codes for the surface with the maximal number of points
had parameters $[69,4,52]$.  We mention the following easy observation.  
Smooth plane curves of degree $m = 4$ have genus $3$, and the largest possible number 
of $\F_7$-rational points is $20$ by 
\cite{mp}.  The same is true for any irreducible plane quartic.  
It follows that if any such surface yields a code with $n - d \ge 21$, then 
some plane section must be \emph{reducible}.  The irreducible components of such plane sections cannot
give classes that are integer multiples of the hyperplane class in ${\rm NS}(X)$, so $[H]$ does
not generate that group.  
\end{expr}

\begin{disc}
We think these examples show that the larger sectional genera of quartic and quintic surfaces 
will tend to produce codes with smaller $d$ than codes from cubic surfaces having the same $n$.
Moreover, this tends to be true even when none of the irreducible plane sections of the surface 
are optimal curves of genus $g = \frac{(m-1)(m-2)}{2}$.
On the other hand, some surfaces of higher degree may have many more $\F_q$-rational points than
any cubic surface and that might tend to overcome the effects of the greater sectional genus.  
So the situation is still rather subtle and we cannot say that higher degree surfaces are automatically bad
in the construction of this paper.         
\end{disc}

\section{Restricting $\rho(X)$ and the sectional genus}

From the previous section, we see that to obtain good codes from an algebraic surface, 
in addition to restricting the Picard number, it is probably also necessary to 
restrict the sectional genus $g$ of the surface.  For small values of $g$, this 
puts some severe restrictions on the available classes of surfaces. The following 
results are very well-known; the article \cite{ab} provides a convenient reference
with arguments valid in characteristic $p > 0$.  
We use the notation from that paper where $g(L)$ denotes the arithmetic genus of the curves
in the linear system corresponding to an ample line bundle $L$.  

\begin{thm}[\cite{ab}, Theorem 1.5]
If $X$ is a smooth surface over an algebraically closed field of 
characteristic $p > 0$ and $L$ is an ample line bundle with $g(L) = 0$, then 
(up to isomorphism) $(X,L)$ is one of the following:
\begin{enumerate}
\item $(\P^2, \mathcal{O}_{\P^2}(r))$, $r = 1,2$.
\item $(Q, \mathcal{O}_Q(1))$, where $Q$ is a smooth quadric surface in $\P^3$, or
\item $(F_r, \mathcal{O}_{F_r}(E + sf))$, where $r \ge 1$, $F_r$ is a Hirzebruch surface, $E$ 
is a rational curve on $F_r$ with $E^2 = -r$, $f^2 = 0$, $E\cdot f = 1$, and $s \ge r + 1$.   The 
sections of the given line bundle embed the surface as a rational scroll of degree $\delta = 2s - r$ 
in $\P^{\delta + 1}$.  
\end{enumerate}
\end{thm}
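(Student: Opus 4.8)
The plan is to extract the numerical content of the hypothesis from the adjunction formula and then invoke the classification of surfaces over a field of characteristic $p$, pushing everything down to $\P^2$ or a Hirzebruch surface. First, by adjunction on $X$ one has $2g(L)-2=(K_X+L)\cdot L$, so $g(L)=0$ is equivalent to $(K_X+L)\cdot L=-2$; since $L$ is ample, $L^2\ge1$, and therefore $K_X\cdot L=-L^2-2\le-3<0$. Hence no positive multiple of $K_X$ is effective, so $X$ has Kodaira dimension $-\infty$, and by the Bombieri--Mumford classification of surfaces in characteristic $p$ (the setting of \cite{ab}), $X$ is birationally ruled; in particular either $X\cong\P^2$ or $X$ admits a $\P^1$-fibration.

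Next I would reduce to a polarization with no ``$(-1)$-lines,'' where I call an irreducible curve $E$ a $(-1)$-\emph{line} if $E^2=K_X\cdot E=-1$ and $L\cdot E=1$. Given such an $E$, contract it, $\sigma\colon X\to X'$; from $K_X=\sigma^*K_{X'}+E$ and $L=\sigma^*L'-E$ one gets $(K_{X'}+L')\cdot L'=(K_X+L)\cdot L=-2$, so $g(L')=0$, and $L'$ stays ample (contracting a line preserves ampleness, e.g.\ by Nakai--Moishezon applied to $L=\sigma^*L'-E$). As each contraction drops the Picard number, after finitely many steps $X$ has no $(-1)$-line; the surfaces in the conclusion either already have this property or differ from one that does by exactly one such blow-up (e.g.\ $(F_1,\mathcal O(E+2f))$ contracts to $(\P^2,\mathcal O_{\P^2}(2))$), so it suffices to classify the ``no $(-1)$-line'' case and note closure under this single blow-up. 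If $X\cong\P^2$ then $L=\mathcal O_{\P^2}(d)$ and $g(L)=\binom{d-1}{2}=0$ gives $d\in\{1,2\}$, which is case~(1). If $X\neq\P^2$, then $X$ is ruled and so carries a $\P^1$-fibration $\phi\colon X\to C$, with general fibre $F$ satisfying $F^2=0$ and $K_X\cdot F=-2$; set $c:=L\cdot F$. When $c=1$, $\phi$ has no reducible fibre (a reducible fibre $F_1+F_2$ would force $L\cdot F_1+L\cdot F_2=1$ with both terms $\ge1$), so $\phi$ is a $\P^1$-bundle and $X\cong F_r$ over $C$; a computation with the relative canonical bundle then gives $(K_X+L)\cdot L=2g(C)-2$, so $g(L)=0$ forces $g(C)=0$ and $C\cong\P^1$.

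The step I expect to be the main obstacle is ruling out $c\ge2$. Here I would push $L$ forward to the rank-$(c+1)$ bundle $\mathcal E=\phi_*L$ on $\P^1$, consider the relative map $X\to\P(\mathcal E)$ defined by $|L|$ on fibres, and use the hypothesis $g(L)=0$ to force $\mathcal E$ to admit a rank-two quotient whose associated ruling has fibre-degree $1$; this is precisely the classical ``a surface of minimal degree is a scroll'' dichotomy, and it is the point where the characteristic-$p$ care---Bertini applied to a very ample power $mL$, and the absence of Kodaira-type vanishing---must be managed as in \cite{ab}. Once $X\cong F_r$ with $L\cdot f=1$ on a ruling $f$, I would write $L=\Sigma+bf$ in ${\rm NS}(F_r)=\ZZ\Sigma\oplus\ZZ f$ with $\Sigma^2=-r$; then $(K_{F_r}+L)\cdot L=-2$ holds automatically and ampleness forces $b\ge r+1$, which is case~(3) (with $r=0$, $b=1$ the quadric $Q$, and $r=0$, $b\ge2$ the remaining rational normal scrolls over $\P^1$). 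Combining the possibilities $X\cong\P^2$ and $X\cong F_r$ with the remark on closure under contracting a $(-1)$-line produces the stated list.
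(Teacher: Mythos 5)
Note first that the paper does not prove this statement; it is quoted (with citation) from Andreatta--Ballico \cite{ab}, so there is no in-text argument to compare against.

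On the merits of your sketch: the adjunction step and the Kodaira-dimension reduction are correct, and so is the $\P^1$-bundle computation $(K_X+L)\cdot L = 2g(C)-2$ forcing the base to be rational once one has a ruling with $L\cdot F = 1$. The genuine gap is the one you flag, and it is structural, not a technicality. Your reduction removes only $(-1)$-curves $E$ with $L\cdot E = 1$ and leaves open the possibility of $(-1)$-curves with $L\cdot E\ge 2$; for those, contraction \emph{raises} the sectional genus (one computes $g(\sigma_*L) = a(a-1)/2$ with $a = L\cdot E\ge 2$), so it is no longer a legal move, and after your reduction $X$ need not be a $\P^1$-bundle at all. Thus nothing in the argument produces a ruling with $c = 1$, and the pushforward sketch for $c\ge 2$ does not supply it: the ``minimal degree is a scroll'' dichotomy is a statement about very ample systems of a specific codimension, and you have not manufactured the required rank-two quotient.

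The missing ingredient is the cone theorem for smooth surfaces, which is characteristic-free. Since $(K_X+L)\cdot L = -2 < 0$ and $L$ is ample, $K_X+L$ is not nef, and since $K_X+L$ is strictly positive on the $K_X$-nonnegative part of the Mori cone, some $K_X$-negative extremal ray $R$ satisfies $(K_X+L)\cdot R < 0$. If $R$ were spanned by a $(-1)$-curve $E$, then $(K_X+L)\cdot E < 0$ would give $L\cdot E < 1$, impossible for ample $L$; hence $R$ is of fiber type, so either $X\cong\P^2$, or $X\to C$ is a $\P^1$-bundle with $(K_X+L)\cdot f < 0$ and $K_X\cdot f = -2$, forcing $L\cdot f = 1$. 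This places you directly in the $c = 1$ case, where your computation applies; the remaining bookkeeping (the ample cone of $F_r$, and the lone $(-1)$-line pair $(F_1,\mathcal{O}(E+2f))$ contracting to $(\P^2,\mathcal{O}(2))$) goes as you describe.
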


In other words, there are very few examples, and the codes from those surfaces are well-understood from 
the perspective of coding theory as projective Reed-Muller codes, codes from 
quadrics and scrolls, or toric surface codes.

The results of applying this construction to more general rational surfaces have been discussed in 
several works.  In \cite{d}, Davis studied codes obtained from certain very explicit 
rational surfaces with an eye toward proving uniform bounds on the minimum distance.
Let $X$ be the blow up of $\P^2$ at eight or fewer points, and let $E_0$ be the class of the
proper transform of a line, and for $i\ge 1$, let $E_i$ be the exceptional curve over a
blown up point. To obtain the minimum distance, \cite{d} employs a strategy along the lines 
of \cite{ls}: find sections which have maximal possible factorization. The result is 

\begin{thm}[\cite{d}]
Let $T \subset \P^2(\F_q)$ be the ``big torus,'' namely 
$$T = \{(x_0:x_1:x_2) \in \P^2(\F_q) \mid x_0 x_1 x_2 \ne 0\}.$$
If $D = m E_0-\sum m_i E_i$ is a nef divisor and $q \ge \max\{m+2, 2m-\sum m_i\}$, then 
the minimum distance of the $C(T,D,\F_q)$ code satisfies
\[
d \ge (q-1)^2 -m(q-1).
\]
\end{thm}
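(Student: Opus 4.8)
The plan is to translate the statement into an elementary point count for plane curves against the torus. First I would identify $T$ with $(\Fqstar)^2$ via the affine chart $x_2\neq 0$, so that $n=|T|=(q-1)^2$ and every $\F_q$-point of $T$ lies in this single chart. Since $X$ is the blow-up of $\P^2$ at the points $p_1,\dots,p_k$ with $k\le 8$ (which we may assume lie off the torus) and $D=mE_0-\sum m_iE_i$, a section of $\mathcal O_X(D)$ is the proper transform of a plane curve of degree $m$ having a point of multiplicity $\ge m_i$ at each $p_i$; dehomogenizing in the chart above it is cut out by some $f\in\F_q[x,y]$ with $\deg f\le m$. The corresponding codeword has zero set $V(f)\cap(\Fqstar)^2$ inside $T$ and hence weight $(q-1)^2-|V(f)\cap(\Fqstar)^2|$. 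So it suffices to prove: (a) every nonzero $f$ with $\deg f\le m$ has $|V(f)\cap(\Fqstar)^2|\le m(q-1)$; and (b) the code is not the zero code, so that the bound applies to genuine nonzero codewords.

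For (a) I would peel off ``vertical'' lines. Let $u(x)=\prod_{a\in S}(x-a)$ be the product of the distinct linear forms $x-a$, $a\in\Fqstar$, dividing $f$, and write $f=u\cdot w$; then $s:=|S|=\deg u\le\min(m,q-1)$, $\deg w=\deg f-\deg u\le m-s$, and no $x-a$ with $a\in\Fqstar$ divides $w$. Now partition $V(f)\cap(\Fqstar)^2$ by the $x$-coordinate. For the $s$ values $a\in S$ we crudely take all $q-1$ points $\{(a,b):b\in\Fqstar\}$, contributing at most $s(q-1)$. For each of the remaining $q-1-s$ values $a\in\Fqstar\setminus S$ we have $f(a,y)=u(a)\,w(a,y)$ with $u(a)\neq 0$ and $w(a,\cdot)$ a nonzero polynomial in $y$ of degree $\le\deg w\le m-s$, hence at most $m-s$ roots. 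Summing,
\[
|V(f)\cap(\Fqstar)^2|\ \le\ s(q-1)+(q-1-s)(m-s)\ =\ m(q-1)-s(m-s)\ \le\ m(q-1),
\]
since $0\le s\le m$. Note this step is completely elementary and does not even invoke $q\ge m+2$.

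For (b), the hypothesis $q\ge m+2$, i.e. $m\le q-2$, is exactly what is needed: if a polynomial $f$ with $\deg f\le q-2$ vanished at every point of $(\Fqstar)^2$, then for each $a\in\Fqstar$ the polynomial $f(a,y)$ would have $y$-degree $\le q-2$ but $q-1$ roots and so vanish identically, forcing every coefficient of $f$ (a polynomial in $x$ of degree $\le q-2$ with $q-1$ roots) to vanish, whence $f=0$. Thus a nonzero section yields a nonzero codeword; together with $H^0(X,\mathcal O_X(D))\neq 0$ (Riemann--Roch, using nef-ness of $D$ and of $-K_X$ on these surfaces) this shows $C(T,D,\F_q)$ has positive dimension, and every nonzero codeword has weight at least $(q-1)^2-m(q-1)=(q-1)(q-1-m)$, which is moreover strictly positive. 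This gives $d\ge(q-1)^2-m(q-1)$.

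The remaining hypotheses are what I would expect to carry the weight of a sharper analysis rather than of this lower bound: nef-ness guarantees that $|D|$ has no unexpected fixed curves (so sections genuinely behave as degree-$m$ plane curves with no forced extra component), and together with $q\ge 2m-\sum m_i$ it is the natural assumption under which one can pin down $k=\dim C(T,D,\F_q)$ and, in the style of \cite{ls}, exhibit maximally factored sections — unions of $m$ torus-lines $x=a_i$ — attaining $n-d=m(q-1)$ and hence proving the bound sharp. For the displayed inequality alone, the crude count in (a) suffices, so I expect the only real care to be in the bookkeeping of the dictionary between $H^0(X,\mathcal O_X(D))$ and plane curves: checking that evaluation at $T\subset\P^2(\F_q)$ loses nothing because the $p_i$ are off the torus, and confirming that the multiplicity conditions at the $p_i$, while cutting down the space of sections, never produce an additional component of $V(f)$ inside $T$ that would have to be counted.
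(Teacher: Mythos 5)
The statement is cited from Davis \cite{d}; the paper gives no proof of its own, only the remark that Davis's strategy is ``along the lines of \cite{ls}: find sections which have maximal possible factorization,'' i.e.\ exhibit the most reducible curves in $|D|$ and control the other sections via point counts on their irreducible components (where Hasse--Weil--type bounds and the hypothesis that $q$ is large enough enter). Your argument is correct and takes a genuinely more elementary route. Once the dictionary is set up (sections of $\mathcal O_X(D)$ $\leftrightarrow$ plane curves of degree $\le m$, valid because the $p_i$ lie off $T$ so evaluation at $T$ ignores the imposed base conditions), you prove the needed inequality $|V(f)\cap(\Fqstar)^2|\le m(q-1)$ by a clean Schwartz--Zippel--style count: peel off the distinct vertical factors $x-a$, getting $s(q-1)+(q-1-s)(m-s)=m(q-1)-s(m-s)\le m(q-1)$, with injectivity of evaluation handled separately from $m\le q-2$. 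This buys you a sharper result -- the zero-count bound requires no hypothesis on $q$ at all, and of the stated hypotheses only $q\ge m+2$ is used -- at the cost of not pinning down exactly which sections attain the bound. The factorization-plus-Hasse--Weil route suggested by the paper's summary of \cite{d} is heavier, uses the hypotheses on $q$ more essentially, but also organizes which reducible sections are extremal and whether the bound is tight. You are correct that the hypothesis $q\ge 2m-\sum m_i$ does not enter the minimum-distance bound itself and most plausibly governs the dimension computation. One small point: your final worry that the multiplicity conditions at the $p_i$ might ``produce an additional component of $V(f)$ inside $T$'' is unfounded -- once $f$ is fixed, its torus zero set is what it is, and since the $p_i\notin T$ the base conditions neither add nor remove torus zeros; the only place they matter is in the Riemann--Roch step confirming the section space is nonzero.
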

In \cite{b}, this bound is improved by one and
Couvreur applies many of these ideas to produce some quite good codes in \cite{c}.

The case of sectional genus $g = 1$ seems to be a favorable one for these coding theory applications. 
But again there are actually very few examples as we see from \cite{ab}.

\begin{thm}[\cite{ab}, Theorem 1.8]
If $X$ is a smooth surface over an algebraically closed field of characteristic
$p > 0$ and $L$ is an ample line bundle with $g(L) = 1$, then $(X,L)$ is either a 
elliptic scroll (that is, a ruled surface over an elliptic curve), or a Del Pezzo surface.
\end{thm}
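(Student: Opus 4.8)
The plan is to run the classical adjunction dichotomy for the polarized surface $(X, L)$, taking care to use only inputs available in characteristic $p$: Mori's theory of extremal rays on surfaces and the positive-characteristic adjunction classification already underlying Theorem 1.5, rather than vanishing theorems.

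First I would apply adjunction on $X$, which gives $2g(L) - 2 = L\cdot(K_X + L)$, so the hypothesis $g(L) = 1$ becomes $L\cdot(K_X + L) = 0$. Since $L$ is ample, the Hodge index theorem applied to $\{L, K_X + L\}$ then forces $(K_X + L)^2 \le 0$, with equality exactly when $K_X + L$ is numerically trivial; I would split the argument along this dichotomy.

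If $K_X + L \equiv 0$, then $-K_X$ is numerically equivalent to the ample class $L$, hence ample, and $X$ is a del Pezzo surface. If instead $(K_X + L)^2 < 0$, then $K_X + L$ is not nef (and since $K_X\cdot L = -L^2 < 0$, neither is $K_X$). Here I would invoke the structure theory of polarized surfaces $(X, L)$ with $L$ ample and $K_X + L$ not nef, in the positive-characteristic form drawn from \cite{ab}: after a finite chain of contractions of $(-1)$-curves $E$ with $L\cdot E = 1$, one reaches a pair $(X_0, L_0)$ that is $(\P^2, \mathcal{O}_{\P^2}(1))$, $(\P^2, \mathcal{O}_{\P^2}(2))$, $(Q, \mathcal{O}_Q(1))$ for a smooth quadric $Q$, or a geometrically ruled surface $\pi\colon X_0 \to B$ with $L_0$ of degree $1$ on the fibers. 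Each contraction $\sigma\colon X \to X'$ keeps the polarization ample (a short Nakai--Moishezon check using $\sigma^*L' = L + E$) and, because $K_X + L = \sigma^*(K_{X'} + L')$, preserves both $(K+L)^2$ and the sectional genus. The first three pairs have sectional genus $0$, so they are ruled out; and on a geometrically ruled surface with $L_0\cdot f = 1$ a direct computation gives $L_0\cdot(K_{X_0} + L_0) = 2g(B) - 2$, so $g(L_0) = g(B)$. Since the sectional genus stays equal to $g(L) = 1$ throughout, $g(B) = 1$ and $X_0$ is an elliptic scroll. Finally I would observe that no contraction actually occurred: if $\sigma\colon X \to X'$ had contracted such a curve $E$, the proper transform on $X$ of the ruling fiber through $\sigma(E)$ would have $L$-degree $L_0\cdot f - 1 = 0$, contradicting ampleness of $L$. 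Hence $X = X_0$ is an elliptic scroll.

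The hard part is the structure result used in the second case --- the characteristic-$p$ classification of polarized surfaces with non-nef adjoint bundle. Over $\mathbb{C}$ this is standard adjunction theory, but in characteristic $p$ one has to replace the usual vanishing and base-point-freeness arguments by Mori's bend-and-break and the positive-characteristic minimal model theory for surfaces; this is precisely the machinery packaged in \cite{ab} and already relied on for Theorem 1.5. Once that is granted, everything else is intersection-theoretic bookkeeping, together with the remark that sectional genus and ampleness are both well behaved under contraction of a $(-1)$-curve meeting $L$ once.
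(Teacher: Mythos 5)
The statement you were asked to prove is quoted by the paper directly from Andreatta--Ballico, Theorem~1.8 of \cite{ab}; the paper gives no proof of its own (just as it gives no proof of the companion $g(L)=0$ classification, Theorem~1.5), so there is no internal argument to compare against. Your sketch is, in substance, a correct reconstruction of the standard adjunction-theoretic proof: adjunction turns $g(L)=1$ into $L\cdot(K_X+L)=0$, the Hodge index theorem forces the dichotomy $K_X+L\equiv 0$ versus $(K_X+L)^2<0$, the first branch gives a Del Pezzo surface since ampleness of $-K_X$ is a numerical condition, and the second forces $K_X+L$ not nef, which is handled by the char-$p$ adjunction/Mori machinery.

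One remark on the second branch: the chain-of-contractions detour and the subsequent ``no contraction actually occurred'' step are both unnecessary, and omitting them makes the argument cleaner. If $K_X+L$ is not nef and $L$ is ample on a smooth projective surface, the cone theorem (valid in characteristic $p$ via bend-and-break) produces an extremal ray $R$ with $(K_X+L)\cdot R<0$. Since $L\cdot R\geq 1$ this forces $K_X\cdot R\leq -2$, which immediately rules out the Castelnuovo contraction of a $(-1)$-curve (there $K_X\cdot E=-1$, so $(K_X+L)\cdot E=L\cdot E-1\geq 0$). Hence the contraction of $R$ is of fiber type, and $(X,L)$ itself --- with no blow-downs needed --- is already one of $(\P^2,\mathcal{O}(1))$, $(\P^2,\mathcal{O}(2))$, a quadric with $\mathcal{O}(1)$, or a $\P^1$-bundle with $L$ of fiber-degree $1$. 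The sectional genus then excludes the first three and forces the base of the ruling to be elliptic, exactly as you say. Your ampleness argument for why the first reduction is trivial is correct, but it is proving something that the extremal-ray bound already gives for free. Separately, be aware that citing \cite{ab} for the key structure result is circular here, since the theorem you are proving \emph{is} from \cite{ab}; the honest input is the positive-characteristic cone theorem for surfaces.
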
 

Scrolls never seem to yield good results in our construction because $\rho(X) \ge 2$ and
there are always hyperplane sections containing several fibers of the ruling.  So
we will concentrate on the Del Pezzo case.   

\subsection{The Del Pezzo quartic in $\P^4$}

In this subsection, we start by considering a code 
obtained from a surface which is a blowup of $\P^2$ at five 
general points. Over a finite field, these surfaces were classified by Rybakov in \cite{r}. 
One reason to consider these surfaces is that the higher embedding dimension
means that the dimension of the $C(X,1,\F_q)$ codes will be larger.   

\begin{expr}
With a random search, we found the following X, a Del Pezzo surface of degree 4 in $\P^4$, 
defined as
\begin{eqnarray*}
{\bf V}(2x^2-2vy-2xy-2y^2+3vz-xz-2yz-2z^2-2xw+2yw+2zw+3w^2,\\
 \qquad -v^2+2vx-x^2-vy-xy-vz+2yz+2z^2-3vw-2xw+2yw-2zw-2w^2).
\end{eqnarray*}
It can be checked easily that 
\begin{itemize}
\item $X$ is smooth
\item $X$ has 57 $\F_7$-rational points
\item $X$ contains no $\F_7$-rational lines or conics.
\end{itemize}
The corresponding $C(X,1,\F_7)$ code has parameters $[57,5,44]$.  

Since there are no $\F_7$-rational lines or conics contained in this surface, 
all the hyperplane sections by $\F_7$-rational planes are irreducible.
As in the discussion of cubic surfaces before, $n - d \le 13$ for any such surface since 
the hyperplane sections of $X$ are elliptic quartic curves in $\P^3$ ($g = 1$).  There are optimal 
$g=1$ curves that appear as hyperplane sections for this surface, so $d = 44$.  By way of contrast, though,
since $X$ is a surface in $\P^4$, we have $k = 5$.  The resulting code has parameters equal to the best 
known code in \cite{g} for this $n, k$ over $\F_7$.  
\end{expr}

\subsection{Codes from other surfaces with sectional genus one}

For a different sort of example related
to the approach taken in \cite{d} and \cite{c},  
take cubics in $\P^2$ through three points defined over $\F_{q^3}$ in 
a general Frobenius orbit ${\mathcal P}_3 = \{P, F(P), F^2(P)\}$.  We assume in particular
that the points in ${\mathcal P}_3$ are \emph{not collinear}. 
If we use the system of cubics in $\P^2$ containing ${\mathcal P}_3$
to map $\P^2$ into $\P^6$, we obtain a certain type of
Del Pezzo surface $X$ of degree 6 defined over $\F_q$.  
(These surfaces are discussed in general in \cite{do}.)
 
\begin{prop}  
This surface $X$ has Picard number $\rho(X) = 2$.  
\end{prop}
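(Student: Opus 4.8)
The plan is to compute ${\rm NS}_{\F_q}(X)$ as the Frobenius-invariant part of the geometric N\'eron--Severi lattice, which is legitimate here because $X$ is geometrically rational. First I would pass to the algebraic closure: since the points of $\mathcal{P}_3$ are distinct and not collinear, $X_{\overline{\F}_q}$ is the blow-up of $\P^2$ at three points in general position, i.e.\ a Del Pezzo surface of degree $K_X^2 = 9-3 = 6$ anticanonically embedded in $\P^6$. Hence ${\rm NS}(X_{\overline{\F}_q}) \cong \ZZ^4$ with a standard orthogonal-type basis $E_0, E_1, E_2, E_3$, where $E_0$ is the class of the pullback of a line and $E_i$ is the class of the exceptional curve over the $i$-th point; in particular the \emph{geometric} Picard number is $4$, and $-K_X = 3E_0 - E_1 - E_2 - E_3$ is the hyperplane class of the given embedding.

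Next I would pin down the Frobenius action $F$ on this lattice. Because $\mathcal{P}_3 = \{P, F(P), F^2(P)\}$ is a single Galois orbit of size three, $F$ cyclically permutes the three blown-up points and hence acts as a $3$-cycle on $E_1, E_2, E_3$. On the other hand $K_X$ is always Galois-stable and $E_1 + E_2 + E_3$ is fixed by any permutation of the $E_i$, so $F$ fixes $E_0 = \frac{1}{3}\bigl(-K_X + E_1 + E_2 + E_3\bigr)$ in ${\rm NS}(X_{\overline{\F}_q}) \otimes \QQ$; torsion-freeness then gives $F(E_0) = E_0$ in the lattice itself. Thus $F$ acts on $\ZZ E_0 \oplus (\ZZ E_1 \oplus \ZZ E_2 \oplus \ZZ E_3)$ as the identity on the first summand and as a $3$-cycle on the second, so its fixed sublattice is
\[
{\rm NS}(X_{\overline{\F}_q})^F = \ZZ E_0 \oplus \ZZ\,(E_1 + E_2 + E_3),
\]
which has rank $2$.

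Finally I would descend to $\F_q$. An $\F_q$-rational divisor has Galois-stable support, so its class in ${\rm NS}(X_{\overline{\F}_q})$ is $F$-invariant; this gives a natural map ${\rm NS}_{\F_q}(X) \to {\rm NS}(X_{\overline{\F}_q})^F$, injective because on a geometrically rational surface algebraic equivalence coincides with numerical equivalence and intersection numbers do not see the base field. This already yields $\rho(X) \le 2$. For the reverse inequality I would exhibit two $\QQ$-independent $\F_q$-rational classes: the hyperplane class $-K_X$, and the class of the reduced exceptional divisor $E_1 \cup E_2 \cup E_3$, which descends to $\F_q$ as a closed subscheme (it is Galois-stable) even though its three components do not; these map to $-K_X$ and $E_1 + E_2 + E_3$, which are linearly independent. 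Hence $\rho(X) = 2$. The one point meriting care is this last identification of ${\rm NS}_{\F_q}(X)$ with a full-rank subgroup of the Frobenius invariants; it is in fact an isomorphism here since ${\rm Br}(\F_q) = 0$, but only the inclusion is needed for the bound, and once the $3$-cycle action of $F$ on $E_1, E_2, E_3$ is established the rest is a routine invariant-lattice calculation.
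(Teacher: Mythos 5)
Your proof is correct, and it takes a genuinely different route from the paper's. The paper computes $|X(\F_{q^r})|$ directly (viewing $X$ as $\P^2$ with three $\F_{q^3}$-lines glued in), extracts the zeta function, and applies Tate's inequality (Picard number is bounded above by the multiplicity of $q$ among the reciprocal roots of the $H^2$-factor) to get $\rho(X)\le 2$; the lower bound comes, as in your argument, from exhibiting two explicit $\F_q$-rational classes (the paper uses $E_0$ and the conic class $2E_0-E_1-E_2-E_3$, you use $-K_X$ and $E_1+E_2+E_3$). You instead compute the Frobenius action on the geometric N\'eron--Severi lattice $\ZZ E_0\oplus\ZZ E_1\oplus\ZZ E_2\oplus\ZZ E_3$ directly and take invariants, using the injection ${\rm NS}_{\F_q}(X)\hookrightarrow{\rm NS}(X_{\overline{\F}_q})^F$ (sharpened to an isomorphism via ${\rm Br}(\F_q)=0$). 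What each buys: your lattice-theoretic route is more elementary and self-contained --- it does not invoke the Weil conjectures or Tate's theorem at all, and in effect gives an unconditional verification of the Tate conjecture for this surface (which the paper explicitly sidesteps, noting it ``was probably known previously for this type of surface'' but giving no reference). The paper's route is longer but produces the full zeta function and point counts $N_r$ as a byproduct, which is data used elsewhere in that section. One small remark: your derivation that $F$ fixes $E_0$ via $E_0=\tfrac13(-K_X+E_1+E_2+E_3)$ is fine, but it can be shortcut by observing that the blow-down $\pi:X\to\P^2$ is defined over $\F_q$ (its center is a Galois-stable closed subscheme), so $E_0=\pi^*\mathcal{O}_{\P^2}(1)$ is an $\F_q$-rational class outright.
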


\begin{proof}
We start by computing the zeta function of the surface $X$.
 Since $X$ is $\P^2$ blown up in a Frobenius orbit of three points defined
over $\F_{q^3}$, to count points over extensions of $\F_q$, we just look at
the corresponding numbers of points on $\P^2$, then add in the $\F_{q^r}$
rational points on three copies of $\P^1$ defined over $\F_{q^3}$ (other
than the points we blew up).  This gives:
$$
|X(\F_{q^r})| =  \begin{cases} 1 + q^{2r} + q^r & r \equiv 1,2 \bmod 3\\
1 + q^{2r} + 4q^r & r \equiv 0 \bmod 3.\end{cases}
$$
This implies the zeta function of $X$ equals the zeta function
of $\P^2$ over $\F_q$, multiplied by 
$\displaystyle \frac{1}{1 - q^3 t^3}.$  (This is a special case of Lemma 2.7 in \cite{r}.)
Hence because of the form of the zeta function described by the Weil conjectures (proved by Deligne), 
$$Z(X,t) = \frac{1}{(1 - t)P_2(t)(1 - q^2t)},$$
where $P_2(t)$ (the factor corresponding to the $H^2$ cohomology) factors as
$$P_2(t) = (1 - qt)\prod_{j=1}^3 (1 - \alpha_j t),$$  
with $\alpha_j = q, e^{2\pi i/3} q, e^{4\pi i/3} q$.  

By the results of Tate from
\cite{t}, this is enough to imply that $\rho(X) =  2$.  In \cite{t}, Tate conjectured that the Picard number
of any smooth projective surface over $\F_q$ is equal to the multiplicity of $q$ as a reciprocal root of the ``$H^2$-factor''
in the denominator of the zeta function.  Although the equality remains conjectural 
in general, Tate also showed the Picard number is bounded above by the number of reciprocal roots equal
to $q$ in all cases.   The bound is 2 in this case, but we can also see two independent elements of ${\rm NS}(X)$ by 
considering the proper transforms on $X$ of lines in $\P^2$, and of the conics through the 
three points in ${\mathcal P}_3$.   We believe Tate's conjecture was probably known previously
for this type of surface but we were not able to find a convenient reference.  In any case, the bound 
suffices for our purposes.  
\end{proof}

Results from \cite{c} show the $C(X,1,\F_q)$ codes on these surfaces are already quite good.  
For completeness, we indicate the proof from this point of view.

\begin{prop}[\cite{c}, Theorem 3.8 is equivalent to this]
For all prime powers $q$,
$C(X,1,\F_q)$ is a $[q^2 + q + 1, 7, q^2 - q - 1]$ linear
code over $\F_q$. 
\end{prop}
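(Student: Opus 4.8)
The plan is to establish the three parameters $n$, $k$, $d$ in turn, using the zeta function and N\'eron-Severi computations already carried out in the preceding proposition. First, $n = |X(\F_q)|$: since $X$ is $\P^2$ blown up in a single Frobenius orbit of three noncollinear points defined over $\F_{q^3}$, and $r = 1 \not\equiv 0 \bmod 3$, the point-count formula from the previous proof gives $|X(\F_q)| = 1 + q^2 + q$, so $n = q^2 + q + 1$. Second, $k = 7$: the surface $X$ is embedded in $\P^6$ by the complete linear system of cubics through $\mathcal{P}_3$, which has projective dimension $6$ (the space of plane cubics has dimension $10$, and imposing passage through three points in general position drops this by $3$). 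One must check that $X$ is projectively normal in this embedding so that the evaluation map from $\F_q[x_0,\ldots,x_6]_1$ surjects onto $H^0(X,\mathcal{O}_X(1))$, which has dimension $7$; this is standard for Del Pezzo surfaces of degree $\ge 3$. One also needs that no nonzero linear form vanishes on all of $X(\F_q)$ — this follows from $\F_q$-generality, which we have assumed, together with the fact that $X$ is nondegenerate in $\P^6$. Hence $k = 7$.

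The substantive part is the minimum distance $d = q^2 - q - 1$, equivalently $n - d = 2q + 2$, the maximal number of $\F_q$-rational zeros of a hyperplane section. A hyperplane section of $X$ corresponds to the proper transform of a plane cubic curve $C \subset \P^2$ passing through the three points of $\mathcal{P}_3$; the section is reducible over $\F_q$ precisely when $C$ decomposes compatibly with the Galois action. Here $\rho(X) = 2$, so $H$ does not generate $\mathrm{NS}(X)$ and Theorem~\ref{D1Th} does not apply directly; instead I would enumerate the possible $\F_q$-irreducible decompositions of such sections. The generic hyperplane section is an irreducible plane cubic with $\mathcal{P}_3$ on it; away from those three points it has at most $1 + q + \lfloor 2\sqrt q\rfloor$ rational points on $X$, but on the blown-up surface one must track how many of the $\F_q$-points of $C$ lie at $\mathcal{P}_3$ (none, since those are not $\F_q$-rational) versus elsewhere, plus contributions from the exceptional curves — though the exceptional $\P^1$'s over $\mathcal{P}_3$ are permuted transitively by Frobenius and contribute no $\F_q$-points individually. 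The maximal count comes instead from a \emph{reducible} section: a cubic $C$ that splits as a line $L$ (defined over $\F_q$, not through any point of $\mathcal{P}_3$, since $\mathcal{P}_3$ is a single orbit of size $3$) together with a conic $Q$ through all of $\mathcal{P}_3$ (necessarily defined over $\F_q$). Such a conic is a smooth plane conic, whose proper transform on $X$ has $q + 1 - 3 = q - 2$ rational points (the three base points removed), while $L$ has $q + 1$ rational points; the two meet in two $\F_q$-points generically. This gives $(q+1) + (q-2) = 2q - 1$ points, which is not yet $2q + 2$.

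The count $2q + 2$ should instead arise from a \emph{totally reducible} section into three lines — but a cubic through $\mathcal{P}_3$ splitting into three $\F_q$-lines would force the three noncollinear points to be distributed among those lines, which is impossible for a single Galois orbit unless the three lines themselves form a Frobenius orbit of lines over $\F_{q^3}$, each passing through one point of $\mathcal{P}_3$ and meeting the others; alternatively, a line plus a conic where the conic degenerates. I expect the correct extremal configuration, and the identification of $n - d = 2q+2$, to be the main obstacle: one must carefully classify all $\F_q$-irreducible decomposition types of hyperplane sections (irreducible cubic, line-plus-conic, three-line orbit, conjugate configurations), count $\F_q$-points on the proper transform in each case including intersection points and contributions at the exceptional locus, and verify that the maximum over all types equals $2q + 2$ while also confirming this value is actually attained. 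I would lean on the analysis in \cite{c}, Theorem 3.8, which the proposition states is equivalent, to pin down the extremal case, and I would use the Hasse-Weil-Serre bound (in the singular form of \cite{ap}) to bound the irreducible case away from the reducible maximum for all $q$. Once the extremal reducible section is identified and its $\F_q$-point count verified to be $2q + 2$, and no section exceeds this, we conclude $d = n - (2q+2) = q^2 - q - 1$.
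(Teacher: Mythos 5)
Your computations of $n=q^2+q+1$ and $k=7$ are correct and essentially match what the paper needs, and your overall strategy of classifying $\F_q$-irreducible decompositions of hyperplane sections is the right one. However, the minimum-distance argument contains a concrete miscount that derails it. In the paragraph on irreducible sections you correctly note that none of the $\F_q$-points of a plane cubic $C$ lie in $\mathcal{P}_3$, since those three base points are only $\F_{q^3}$-rational. But then, when you turn to a conic $Q$ through $\mathcal{P}_3$, you subtract the three base points and assert that the proper transform of $Q$ has $q+1-3 = q-2$ rational points. This is inconsistent with your own earlier observation: because $\mathcal{P}_3$ contains no $\F_q$-rational points, nothing is removed, and the proper transform of a smooth $\F_q$-conic $Q$ through $\mathcal{P}_3$ still has exactly $q+1$ $\F_q$-points. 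This error pushes your estimate for a conic-plus-line section down to $2q-1$, and sends you searching for a nonexistent three-line extremal configuration.

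The correct extremal case is exactly $Q\cup L$ with $Q$ an $\F_q$-rational conic through $\mathcal{P}_3$ and $L$ an $\F_q$-rational line that meets $Q$ in two Frobenius-conjugate $\F_{q^2}$-points; then the union has $(q+1)+(q+1)=2q+2$ $\F_q$-points, since no intersection points are counted twice. The remaining case analysis closing the argument is: an irreducible cubic has at most $1+q+\lfloor 2\sqrt{q}\rfloor \le (1+\sqrt{q})^2 < 2q+2$ points by Hasse--Weil--Serre (with the singular-curve version from \cite{ap} covering non-smooth cases); if $Q\cap L$ consists of two distinct $\F_q$-points, the union has $(q+1)+(q+1)-2=2q$ points; if $L$ is tangent to $Q$ at an $\F_q$-point, the count is $2q+1$; and there are no cubics in this system that split into three $\F_q$-rational lines, because an $\F_q$-line through $P$ would contain the entire orbit $\mathcal{P}_3$, contradicting noncollinearity. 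Note also that appealing to Couvreur's Theorem~3.8 to ``pin down the extremal case'' is circular here, since the proposition asserts the two statements are equivalent; the case analysis just sketched is precisely the content that must be supplied.
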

         
\begin{proof}
The sections of $X$ by hyperplanes in $\P^6$ are the images of cubics in the plane
passing through ${\mathcal P}_3$.  
Since ${\mathcal P}_3$ is general, there are no $\F_q$-rational lines containing
all three points in ${\mathcal P}_3$, but there is a 2-parameter family of 
$\F_q$-rational conics containing those points.  Hence the cubics through
${\mathcal P}_3$ contain reducible cubics consisting of an $\F_q$-rational 
conic through ${\mathcal P}_3$ union an arbitrary $\F_q$-rational line.    
Among the codewords of the $C(X,1,{\mathcal P}_3)$ code, then, we will always
find some coming from cubics that factor as an $\F_q$-rational conic containing 
${\mathcal P}_3$ and an $\F_q$-rational line that intersects the conic in two conjugate
points defined over $\F_{q^2}$.  This gives codewords with $2(q + 1) = 2q + 2$ 
zeroes.  We claim that this is the largest possible number of zeroes in any nonzero
codeword, which yields $d = (q^2 + q + 1) - (2q + 2) = q^2 - q - 1$ as in the statement
to be proved.  The claim follows from a case by case analysis of the possible factorizations
of a cubic $C$ containing ${\mathcal P}_3$.  If $C$ is irreducible, the Hasse-Weil-Serre
bound implies 
$$|C(\F_q)| \le 1 + q + \lfloor 2\sqrt{q}\rfloor \le (1 + \sqrt{q})^2.$$
But $(1 + \sqrt{q})^2 < 2 q + 2$
for all $q \ge 2$.  If $C = Q \cup L$ where $Q$ is a conic containing ${\mathcal P}_3$
and $L$ is a line such that $Q \cap L$ consists of distinct $\F_q$-rational points, then
$|(Q\cup L)(\F_q)| = (q + 1) + (q + 1) - 2 = 2q$.   If the line $L$ is tangent to $Q$ then 
$|(Q\cup L)(\F_q)| = (q + 1) + (q + 1) - 1 = 2q + 1$.  So such cubics give codewords containing
$2q$ or $2q + 1$ zeroes.     Finally, there are no reducible cubics with three
$\F_q$-rational line components in this linear system since if such a line contains
$P$, then it would have to contain the whole Frobenius orbit ${\mathcal P}_3$.  But there
are no such lines by our general position hypothesis for ${\mathcal P}_3$.     Hence
the minimum-weight codewords will have $2q + 2$ zeroes as claimed.
\end{proof}

For example, this yields codes with parameters as follows:
\begin{itemize}
\item $q = 7$:  $[57, 7, 41]$ 
\item $q = 8$:  $[73, 7, 55]$ 
\item $q = 9$:  $[91, 7, 71]$
\end{itemize}
All three of these equal the best known values for $d$ for these $n,k$ according to 
\cite{g}.

To conclude our discussion we will consider the $C(X,2,\F_q)$ codes from 
these Del Pezzo surfaces.

\begin{thm}
Let $q > 5$.  Then
$C(X,2,\F_q)$ is a $[q^2 + q + 1, 19, \le q^2 - 3q - 1]$
code over $\F_q$.  Moreover, $d = q^2 - 3q - 1$ for all prime powers $q > 36$.
\end{thm}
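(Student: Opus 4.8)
The plan is to follow the same template as the earlier results: first compute the dimension $k$, then exhibit codewords of small weight to get the upper bound on $d$, and finally argue that for large $q$ no codeword can have more zeros than the constructed family, giving the exact value of $d$.

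\textbf{Dimension.} The sections of $\mathcal{O}_X(2H)$ on the degree-$6$ Del Pezzo $X \subset \P^6$ are the images of the linear system of sextics in $\P^2$ that are singular (i.e.\ have a double point, or more precisely pass doubly) at each of the three points of $\mathcal{P}_3$ — equivalently, plane sextics with a point of multiplicity $\ge 2$ at $P$, $F(P)$, $F^2(P)$. The space of all plane sextics has dimension $\binom{8}{2} = 28$; imposing a double point at a point is $3$ linear conditions, and since $\mathcal{P}_3$ is in general position these $9$ conditions are independent, so $h^0 = 28 - 9 = 19$. (This matches $2H \cdot 2H/2 + \text{correction}$; alternatively $2H$ corresponds to $-2K_X$ and one checks $h^0(-2K_X) = 1 + (-2K_X)\cdot(-2K_X - K_X)/2 + 1 = 19$ by Riemann--Roch on $X$, using $K_X^2 = 6$, $h^1 = h^2 = 0$ since $-2K_X$ is ample and $X$ is rational.) One also checks no nonzero quadric on $\P^6$ vanishes on all of $X(\F_q)$ for $q > 5$, so $k = 19$ exactly.

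\textbf{Upper bound via reducible sections.} Following the pattern of Theorem~\ref{distbds} and the $C(X,1,\F_q)$ argument, I look for the most reducible section of $\mathcal{O}_X(2H)$, i.e.\ a sextic through $\mathcal{P}_3$ (doubly) that splits into pieces each carrying many $\F_q$-points. The natural candidate is the union of two $\F_q$-rational conics, each passing through all of $\mathcal{P}_3$: this is a sextic with the right double points, and each conic contributes $q+1$ points. Two such conics meet in $4$ points, three of which are the points of $\mathcal{P}_3$ (defined over $\F_{q^3}$, hence not $\F_q$-rational) and the fourth can be arranged — since there is a $2$-dimensional family of conics through $\mathcal{P}_3$ — to be either a conjugate pair over $\F_{q^2}$ together with... no: the fourth intersection point is a single point, and generically it is $\F_q$-rational. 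To maximize zeros I instead want the residual intersection to contribute no $\F_q$-points; a better candidate is one $\F_q$-conic through $\mathcal{P}_3$ union a second reducible "conic" that is a pair of $\F_q$-lines, or a conic union two lines. Carefully: take an $\F_q$-conic $Q_1$ through $\mathcal{P}_3$, and then $Q_2 = L_1 \cup L_2$ where $L_1, L_2$ are $\F_q$-lines — but such $L_i$ cannot pass through the (non-collinear, Galois-conjugate) points of $\mathcal{P}_3$. The correct maximal configuration, giving $q^2 - 3q - 1 = (q^2+q+1) - (4q+2)$ zeros, is the union of two $\F_q$-rational conics $Q_1, Q_2$ through $\mathcal{P}_3$ chosen so their fourth intersection point is $\F_q$-rational and lies on... so that $|(Q_1 \cup Q_2)(\F_q)| = 2(q+1) - 1 = 2q+1$, giving $4q+2$ zeros? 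That over-counts; I will chase the intersection-theory bookkeeping on $X$ (where $Q_i$ pull back to conic classes $2E_0 - E_1 - E_2 - E_3$ with self-intersection $1$ and mutual intersection $2H\cdot 2H - 2 = \ldots$) to pin down the exact configuration realizing $4q+2$ zeros, then verify such a configuration exists over $\F_q$ by a dimension count on the family of pairs of conics through $\mathcal{P}_3$.

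\textbf{Lower bound for large $q$ and the main obstacle.} For $d = q^2 - 3q - 1$ exactly when $q > 36$, I must show no section of $\mathcal{O}_X(2H)$ has more than $4q+2$ zeros in $X(\F_q)$. As in Theorem~\ref{distbds}, a section corresponds to a curve $D \in |2H|$ with arithmetic genus $\pi_D$ (computable from adjunction on $X$: $2\pi_D - 2 = (K_X + 2H)\cdot 2H = (K_X - 2K_X)\cdot(-2K_X) = -2K_X^2 \cdot (-1) = \ldots$, so $\pi_D$ is a fixed small number, around $4$). If $D$ is absolutely irreducible, Hasse--Weil--Serre gives $|D(\F_q)| \le 1 + q + \pi_D\lfloor 2\sqrt q\rfloor < q^2 - 3q - 1$'s complement only if $4q + 2 > q + \pi_D\lfloor 2\sqrt q\rfloor$, which holds for $q$ large — this is where the threshold $q > 36$ (or thereabouts) enters, and I would track the constant carefully. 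If $D$ is reducible, write $D = \sum a_i D_i$; since $\mathrm{NS}(X)$ here has rank $2$ generated by $H$ and (say) the conic class $Q$ through $\mathcal{P}_3$, each component has class $\alpha H + \beta Q$ and the constraint $[D] = [2H]$ severely limits the decompositions — essentially $D$ is a union of at most two conics and/or two hyperplane sections (lines in $\P^6$ don't lie on $X$ as $X$ has no $\F_q$-lines), mirroring the case analysis in the $C(X,1,\F_q)$ proof but one degree up. I enumerate these finitely many combinatorial types, bound $|D_i(\F_q)|$ and the pairwise intersections $D_i \cdot D_j$ on $X$, and check each type gives $\le 4q+2$ zeros. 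The components that are irreducible over $\F_q$ but not absolutely so are handled exactly as in Theorem~\ref{D1Th}: they lose $\F_q$-points relative to the absolutely irreducible case. \textbf{The main obstacle} is the reducible case analysis: correctly enumerating all ways a class of $[2H]$ on a Picard-rank-$2$ surface decomposes into effective $\F_q$-rational (or $\F_q$-irreducible) pieces, and for each, bounding the total $\F_q$-point count sharply enough that the maximum is realized precisely by the two-conics configuration — together with verifying that configuration is actually attainable over every $\F_q$ with $q > 36$.
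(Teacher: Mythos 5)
Your overall template matches the paper's --- compute $k$, construct low-weight codewords, then rule out anything worse for large $q$ --- and the dimension argument is equivalent to the paper's (the nine double-point conditions at $\mathcal{P}_3$ correspond to the nine quadric generators of $I_X$). But the upper-bound construction, the heart of the theorem, is wrong, and the lower bound is only a sketch.

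\textbf{The degree error.} Since $H = -K_X$ on this degree-$6$ Del Pezzo, $|2H| = |{-2K_X}|$ is the system of \emph{sextic} plane curves with double points at the three points of $\mathcal{P}_3$. Your candidate ``union of two $\F_q$-conics through $\mathcal{P}_3$'' has degree $4$; its class is $4E_0 - 2(E_1+E_2+E_3) = 2Q$, not $2H$, and the same goes for ``one conic plus two lines.'' You return to the two-conic candidate even after running into trouble, and you explicitly defer the ``intersection-theory bookkeeping'' that would have exposed the mismatch. The configuration the paper actually uses is the union of two \emph{cubics} $(C_1 \cup L_1)$ and $(C_2 \cup L_2)$, each a conic through $\mathcal{P}_3$ together with an $\F_q$-line meeting each $C_j$ in a conjugate pair over $\F_{q^2}$; each cubic carries $2q+2$ rational points and the two cubics share exactly two $\F_q$-rational points (the residual intersection of $C_1, C_2$ and the point $L_1 \cap L_2$), giving $4q+2$ zeros and hence weight $q^2 - 3q - 1$.

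\textbf{The lower bound.} Your plan to enumerate decompositions of $[2H]$ in ${\rm NS}(X) \cong \ZZ^2$ is structurally sound and parallel to the paper's enumeration of degree partitions of the sextic, $6 = 5+1 = 4+2 = 3+3 = 3+2+1 = 2+2+1+1$, but none of the cases are carried out. You also misstate the sectional genus of $|2H|$ as ``around $4$''; adjunction gives $2\pi - 2 = (K_X + 2H)\cdot 2H = H\cdot 2H = 2H^2 = 12$, so $\pi = 7$. Finally, for the absolutely irreducible case the paper does not use Hasse--Weil--Serre directly; it invokes a St\"ohr--Voloch-type bound of Borges--Homma for plane curves of degree $6$ and genus $7$ (at most $3q+12$ points when $\sqrt{q}\ge 6$), and the hypothesis $q>36$ in the theorem statement is precisely the applicability condition of that bound, not a threshold you would recover by ``tracking the constant'' in HWS.
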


\begin{proof}
The dimension of the space of homogeneous polynomials of degree 2 on $\P^6$ is $\binom{6 + 2}{2} = \binom{8}{2}$.
However, it is easy to check that the ideal of $X$ is generated by 9 linearly independent quadrics.
Hence $k = \dim_{\F_q} C(X,2,\F_q) \le \binom{8}{2} - 9 = 19$.  The $\F_q$-rational
points on $X$ are in general enough position to guarantee that Definition~\ref{Fqgen}
is satisfied.  Hence the dimension of the code is exactly $k = 19$.

To show that the minimum distance is bounded above by $q^2 - 3q - 1$, we claim first that there are
codewords with precisely $4q + 2$ zeroes.  Sections of $X$ by quadrics in $\P^6$ correspond
to elements of the linear system of sextic curves in the plane with double points at the three points in 
${\mathcal P}_3$.  Among the elements of that linear system are reducible sextics of the 
form $(C_1 \cup L_1) \cup (C_2 \cup L_2)$ where the $C_i$ are $\F_q$-conics through ${\mathcal P}_3$,
the $L_i$ are $\F_q$-lines, and for each $i$, $C_i \cap L_1$ and $C_i \cap L_2$ both 
consist of two Frobenius-conjugate points
defined over $\F_{q^2}$.  Note that there are $2q + 2$ $\F_q$-rational points on $C_i \cup L_i$ for
each $i$.  However, since $C_1$ and $C_2$ both pass through ${\mathcal P}_3$, their fourth 
point of intersection must be defined over $\F_q$.  Moreover $L_1$ and $L_2$ also meet in an 
$\F_q$-rational point.  Hence the union $(C_1 \cup L_1) \cup (C_2 \cup L_2)$ contains 
$(2q + 2) + (2q + 2) - 2 = 4q + 2$ $\F_q$-rational points.  Evaluating a polynomial corresponding 
to one of these sextics gives a codeword of weight $q^2 + q + 1 - (4q + 2) = q^2 - 3q - 1$.

To conclude the proof, we need to show that  if $q > 36$, no section of $X$ by a quadric in $\P^6$ gives 
a curve with more than $4q + 2$ $\F_q$-rational points.  Sextic curves in the plane with double points
at the points of ${\mathcal P}_3$ have arithmetic genus $g = 7$.  By Corollary
3.3 of \cite{BH}, if $6 \le \sqrt{q}$ (hence the prime power $q$ satisfies $q > 36$), then 
a plane curve of $d = 6$ and genus $g = 7$ has at most
$$\frac{(2\cdot 7 - 2) + (q + 2)\cdot 6}{2} = 3q + 12$$
$\F_q$-rational points.  This result is based on the St\"ohr-Voloch theorem (see \cite{sv})
and the fact that curves of small degree relative to $q$ are automatically Frobenius-classical.  
We see $3q + 12 < 4q + 2$ for all $q > 11$, hence in particular for all $q > 36$.  

If $Y$ is the union of an absolutely irreducible quintic with double points at the points of ${\mathcal P}_3$ and a line, 
then by \cite{ap}, $Y$ has at most $1 + q + 3 \lfloor 2\sqrt{q}\rfloor + 1 + q = 2 + 2q + 6 \lfloor 2\sqrt{q}\rfloor$ $\F_q$-rational points.  
This is $< 4q + 2$ for all $q > 36$.  

If $Y$ is the union of a quartic and a conic, then similarly $Y$ can have at most 
$1 + q + 3 \lfloor 2\sqrt{q}\rfloor + 1 + q = 2 + 2q + 3 \lfloor 2\sqrt{q}\rfloor$ $\F_q$-rational
points and this is also $< 4 q + 2$ for all $q > 9$.  

If $Y$ is the union of two irreducible cubics containing ${\mathcal P}_3$, 
Then $Y$ has at most $2 + 2q + 4\sqrt{q}$ $\F_q$-rational points.  This is $< 4q + 2$ 
for all $q \ge 5$.  If $Y = Z_1 \cup (C \cup L)$
where $Z_1$ is an irreducible cubic over $\F_q$ and $C \cup L$ is a cubic of the form considered in
the second paragraph of this proof, then $Y$ contains at most $1 + q + 2\sqrt{q} + 2q + 2 = 3 + 3q + 2\sqrt{q}$
$\F_q$-rational points.  But $3 + 3q + 2\sqrt{q} < 2 + 4q$ for all $q > 5$.  Finally, any 
$Y = (C_1 \cup L_1) \cup (C_2 \cup L_2)$ has at most $4q + 2$ $\F_q$-rational points as well
since the arrangements considered in the previous paragraph have \emph{smallest number of common
points} defined over $\F_q$ in $(C_1 \cup L_1) \cap (C_2 \cup L_2)$.
\end{proof}

\begin{expr}
\label{q7and9}
By {\tt Magma} calculations we have verified that the equality $d = q^2 - 3q - 1$ also holds for 
$q = 7$ and $q = 9$ using particular choices of the Frobenius orbit 
${\mathcal P}_3 = \{P,F(P),F^2(P)\}$.  This yields a $[57,19,27]$ code over $\F_7$ and 
a $[91,19,53]$ code over $\F_9$.    

The code over $\F_7$ improves on the 
best previously known $d$ by 1.  Markus Grassl observed via another {\tt Magma} calculation 
that this code is equivalent to a cyclic code and the cyclic 
form has been added to the database \cite{g}.   

The code over $\F_9$ improves
the best previously known $d = 51$ by 2.   

On the other hand, for $q = 8$, and again for a particular choice of Frobenius orbit, 
the parameters of the $C(X,2,\F_8)$ code we tested were $[73, 19, 37]$, while $8^2 - 3\cdot 8 - 1 = 39$
and the database \cite{g} contains a $[73,19,39]$ code.  
The minimum-weight codewords for our $[73,19,37]$ code come from absolutely irreducible plane sextics 
with nodes at the three points in ${\mathcal P}_3$ and no other singularities, 
hence curves of geometric genus $7$.  The $36$ $\F_8$-rational points on these curves 
happen to improve on the largest number of $\F_8$-rational points
for curves of genus 7 reported in the database \cite{mp} as of September 2017.  
\end{expr}

In Theorem 3.12 of \cite{c}, Couvreur considers the linear system of all quintic curves in $\P^2$
passing through a Frobenius orbit ${\mathcal P}_3$ and constructs a
$[q^2 + q + 1, 18, q^2 - 3q - 1]$ code.  Our examples improve his by increasing the dimension by $1$.

\section{Concluding remarks} 

Our work raises a number of questions:
\begin{enumerate}
\item Are there other surfaces where Theorem~\ref{d1bdth} and Theorem~\ref{distbds} 
apply to yield really good codes?
\item What happens for varieties of dimension $\ge 3$?
\item Is Conjecture~\ref{C12Conj} true in general?
\end{enumerate}
We hope to return to these in future work.

\vskip 10pt

\noindent{\bf Acknowledgments} This paper began during a visit by the first author 
at the University of Illinois, and we thank NSF for the support that made this possible.
Computations were done using {\tt Magma}, \cite{Mag}.  We would also like to thank
the referees for their careful reading of the original version and their many helpful
comments and suggestions. Finally, we wish to thank Markus Grassl for his
assistance studying the codes described in the Experimental Results~\ref{q7and9}.

\bibliographystyle{siam}

\end{document}